\newtheorem{coro}{Corollary}[section]
\newtheorem{theorem}{Theorem}[section]
\newtheorem{definition}{Definition}[section]
\newtheorem{lemma}{Lemma}[section]
\begin{document}

\title{{\bf On Blockwise Symmetric Matchgate Signatures and Higher Domain \#CSP}}

\vspace{0.3in}
\author{
Zhiguo Fu\thanks{School of Mathematics, Jilin University; {\tt fuzg@jlu.edu.cn}.}}

\date{}
\maketitle
\vspace{0.3in}
\begin{abstract}
For any $n\geq 3$ and $ q\geq 3$,
we prove that
the {\sc Equality}
function
$(=_n)$ on $n$ variables over a domain of size $q$
 cannot be realized by matchgates under holographic transformations.
This is a consequence of our theorem on the structure of blockwise symmetric matchgate signatures.
This has the implication that the standard holographic algorithms based on matchgates,
 a methodology known
to be universal for \#CSP over the Boolean domain,
cannot produce P-time algorithms for planar \#CSP over any
higher domain $q\geq 3$.
\end{abstract}

\thispagestyle{empty}
\newpage
\setcounter{page}{1}

\section{Introduction}
Half a century ago, the
Fisher-Kasteleyn-Temperley (FKT) algorithm
was discovered~\cite{TF61, Kasteleyn1961, Kasteleyn1967}. The FKT algorithm can
count the number of perfect matchings (dimers) over planar graphs
in polynomial time. This is a milestone in the long history
in statistical physics and combinatorial algorithms.
But the case over general graphs is different.
In 1979
L. Valiant \cite{Valiant-permanent-paper79-TCS} defined the class
\#P for counting problems. Most counting problems of a combinatorial
nature,
 Sum-of-Product computations
such as partition functions studied in physics, and counting
 Constraint Satisfaction Problems (\#CSP) are all included in \#P
(or more precisely in
$\operatorname{FP}^{\operatorname{\#P}}$ as the output may be
non-integers). In particular,
counting perfect matchings in general graphs is \#P-complete~\cite{valiant-enum}.

In two seminal papers~\cite{Val02a,string23},
L.~Valiant introduced \emph{matchgates} and \emph{holographic algorithms}.
Computation in holographic algorithms based on matchgates is expressed and interpreted through a choice of linear basis vectors in an exponential ``holographic" mix. Then the actual computation is carried out, via the Holant Theorem, by
the FKT algorithm  for counting the number of perfect matchings in a planar graph.
 This methodology has produced polynomial time algorithms for a variety of problems ranging from restrictive versions of Satisfiability, Vertex Cover, to other graph problems such as edge orientation and node/edge deletion. No polynomial time algorithms were known for any of these problems, and some minor variations are known to be NP-hard.

In the past decade significant progress was made
in the understanding of these remarkable
 algorithms~\cite{Cai-Fu-Guo-W, caiguowilliams13,  art-sc,
Cai-Lu-Xia-real, Guo-Williams,
 landsberg-morton, Val02b, string23, Val06}.
In an interesting twist, it turns out
that the idea of a holographic reduction is not only
a powerful technique to design unexpected algorithms
(tractability),
but also  an indispensable tool to prove intractability
and then
to prove classification theorems.
Furthermore, in  a  self-referential twist,
it has proved to be
a crucial tool to
understand the limit and scope of the newly introduced holographic
algorithms themselves~\cite{clx-holant, ghlx-stacs-2011, HL12, glva-2013,
caiguowilliams13, Guo-Williams, Cai-Fu-Guo-W}.
This study has produced
a number of complexity dichotomy theorems.
These classify \emph{every} problem expressible
 in a framework as either solvable in P-time or being \#P-hard,
with nothing in between.

One such framework is called (weighted) \#CSP problems.
Let $[q] = \{0, 1, \cdots, q-1\}$ denote a domain of size $q$.
A \#CSP problem over the domain $[q]$
  (for $q=2$,  it is the Boolean domain)
is specified  by a fixed finite set $\mathcal{F}$ of local constraint
functions. Each function $f \in \mathcal{F}$ has an arity $k$,
and maps $[q]^k \rightarrow \mathbb{C}$.
(Unweighted \#CSP problems are defined by 0-1 valued constraint
functions.)
An instance of \#CSP($\mathcal{F}$) is specified by
a finite set of variables $X = \{x_1, x_2, \ldots, x_n\}$,
and a finite sequence of constraints $\mathcal{S}$ from $\mathcal{F}$, each
applied to  an ordered sequence of variables from $X$.
The output of this instance is $\sum_{\sigma}
 \prod_{f \in \mathcal{S}} f|_{\sigma}$, a sum over all
$\sigma: X \rightarrow [q]$, of products
of all constraints in $\mathcal{S}$ evaluated according to $\sigma$.
\#CSP is a very expressive framework for locally specified counting
problems. E.g., all spin systems are special cases  where
$\mathcal{F}$ consists of a single binary constraint,
and possibly some unary constraints when there are ``external fields''.

The following classification theorem
 for \#CSP on the Boolean domain  is proved in
gradually increasing
generalities~\cite{Cai-Lu-Xia-real, Guo-Williams, cai-fu-plcsp},
  reaching full generality
in \cite{cai-fu-plcsp}:
\begin{theorem}\label{theorem-main-intro}
For any  finite
  set of constraint functions $\mathcal{F}$ over Boolean variables,
each  taking (algebraic) complex values and not necessarily symmetric,
\#CSP($\mathcal{F}$) belongs to exactly one of three categories
according to $\mathcal{F}$:
(1) It is P-time solvable;
(2) It is P-time solvable over planar graphs but \#P-hard over general graphs;
(3) It is \#P-hard over planar graphs.
Moreover,
category (2) consists \emph{precisely} of those problems that are holographically reducible to the FKT algorithm,
whereby all constraint functions in  $\mathcal{F}$ and
 {\sc Equality} functions
of all arities are transformed to matchgate signatures.
\end{theorem}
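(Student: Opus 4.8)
The plan is to prove the two halves of the trichotomy separately and then identify category (2) with holographic reducibility. First I would pass to the Holant framework: an instance of \#CSP($\mathcal{F}$) is equivalent to a Holant instance over the signature set $\mathcal{F}\cup\{=_k : k\ge 1\}$, where each variable becomes an {\sc Equality} vertex joining all constraints that act on it. This translation preserves planarity in the factor-graph sense, so complexity statements transfer in both directions. With this reduction in place, the split between category (1) and categories (2)--(3) is handed to us by the existing dichotomy for \#CSP over general graphs, whose tractable side is exactly the affine signatures and the product-type (local-affine-support) signatures; everything outside category (1) is \#P-hard over general graphs. It then remains, for each such $\mathcal{F}$, to decide between category (2) and category (3).

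For the tractability direction of category (2), suppose there is an invertible $2\times 2$ matrix $T$ under which every $f\in\mathcal{F}$ and every {\sc Equality} function $=_k$ becomes a matchgate signature. The Holant Theorem then rewrites any planar instance as a planar Holant problem in which each vertex carries a matchgate signature; replacing every such signature by its underlying planar matchgate subgraph produces one planar graph whose perfect matchings are counted in polynomial time by the FKT algorithm. This is the short half of the argument, and it already shows that the holographically reducible sets lie in category (2), since the general-graph dichotomy places them on the \#P-hard side there.

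The substance is the converse: if $\mathcal{F}$ is \#P-hard over general graphs and is \emph{not} holographically reducible to matchgates, it must be \#P-hard even over planar graphs. Here I would rely on the algebraic characterization of matchgate signatures --- the parity condition together with the matchgate (Grassmann--Plücker) identities, which apply to general, not necessarily symmetric, signatures --- and on the group of holographic transformations preserving this class. A first technical step is to pin down exactly which invertible $T$ send all of $=_1,=_2,=_3,\dots$ \emph{simultaneously} into the matchgate class; over the Boolean domain this is a restrictive but nonempty family, and it is precisely this alignment of all {\sc Equality} functions that fails for $q\ge 3$ and motivates the present paper. When no single $T$ can place both $\mathcal{F}$ and all {\sc Equality} functions inside the matchgate class, one must manufacture, via planar gadgets and polynomial interpolation, a signature that violates a matchgate identity in \emph{every} holographic frame, and then use it to simulate a known \#P-hard planar problem. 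I expect the main obstacle to be exactly this final construction: controlling which signatures planar gadgets can and cannot realize, and ruling out \emph{all} bases $T$ at once rather than a single candidate, so that the trichotomy is exhaustive and no fourth behavior survives.
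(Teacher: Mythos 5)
The paper does not actually prove Theorem~\ref{theorem-main-intro}: it is quoted from prior work (\cite{Cai-Lu-Xia-real, Guo-Williams, cai-fu-plcsp}, reaching full generality in \cite{cai-fu-plcsp}), so there is no in-paper proof to compare against, and your proposal must be judged on its own terms. At a high level your outline does track the strategy of that literature: the planarity-preserving reformulation $\operatorname{\#CSP}(\mathcal{F}) \equiv_T \operatorname{Holant}(\mathcal{EQ}\mid\mathcal{F})$, the tractability half via the Holant Theorem plus FKT, and the appeal to the general-graph dichotomy (tractable classes $\mathcal{A}$ and $\mathcal{P}$) to carve out category (1). Even in the easy half there is a caveat: you quantify only over invertible $2\times 2$ matrices $T$, but a holographic transformation may a priori use a $2\times 2^{\ell}$ basis with $\ell>1$; reducing to $\ell=1$ is exactly the content of the basis collapse theorems \cite{c-l-collapse, sitan, mingji} and should be invoked explicitly rather than assumed.

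The genuine gap is the converse direction, which is the entire substance of the theorem and which your proposal only names as a goal. Saying one must ``manufacture, via planar gadgets and polynomial interpolation, a signature that violates a matchgate identity in every holographic frame, and then use it to simulate a known \#P-hard planar problem'' is not an argument: (i) non-membership of $\mathcal{F}\cup\mathcal{EQ}$ in $T\mathscr{M}$ for every basis $T$ does not by itself yield any reduction --- a signature failing the matchgate identities is not automatically a source of hardness, and one must determine exactly which signatures planar gadgets over $\mathcal{F}\cup\mathcal{EQ}$ can realize or interpolate, with pinning and interpolation in the planar, asymmetric, complex-weighted setting being themselves nontrivial; (ii) the case analysis must steer clear of all three tractable classes $\mathcal{A}$, $\mathcal{P}$, and $\widehat{\mathscr{M}}$ simultaneously, not merely the matchgate one, or the claimed exhaustiveness (``no fourth behavior'') fails. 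In \cite{cai-fu-plcsp} this direction occupies the bulk of a very long proof with substantial new machinery; none of it is reconstructed, or even sketched beyond a statement of intent, in your proposal, so as written the hard half of the trichotomy is assumed rather than proved.
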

Theorem~\ref{theorem-main-intro} shows that
 holographic algorithms with matchgates
form a \emph{universal} strategy, that applies a holographic transformation whereby we transform all {\sc Equality} functions
to matchgates,  for all problems in this framework
 that are \#P-hard in general but solvable
in polynomial time on planar graphs.
But for \#CSP over higher domains, the situation is different.
Over the general domain of size $q \ge 3$,
there are only a few holographic algorithms
with matchgates~\cite{string25,Cai-Fu}.  But
it can be argued that they are problems that actually get transformed to a
Boolean domain \#CSP problems.
In the present paper, we prove the following theorem:
\begin{theorem}\label{equality-no-matchgate}
For any $n\geq 3$ and $q\geq 3$,
there is no $q\times 2^{\ell}$ matrix $M$ of rank $q$ such that
the transformed  {\sc Equality}  function  $(=_n)M^{\otimes n}$
by $M$
can be realized by a matchgate signature, where $(=_n)$
 is the {\sc Equality} function on $n$ variables over the domain $[q]$.
\end{theorem}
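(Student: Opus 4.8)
The plan is to reduce the statement to a purely structural fact about the tensor $(=_n)M^{\otimes n}$ and then to rule that structure out using the matchgate identities. First I would make the target explicit. Writing $m_0,\dots,m_{q-1}\in\mathbb{C}^{2^\ell}$ for the rows of $M$ and using $(=_n)=\sum_{c\in[q]}e_c^{\otimes n}$ in the standard basis of $\mathbb{C}^q$, one gets
\[
(=_n)M^{\otimes n}=\sum_{c=0}^{q-1}(e_cM)^{\otimes n}=\sum_{c=0}^{q-1}m_c^{\otimes n}=:F,
\]
a Boolean signature of arity $\ell n$ whose $\ell$-bit blocks may be permuted freely, i.e.\ a \emph{blockwise symmetric} signature with $n\ge 3$ blocks. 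Since $\operatorname{rank}M=q$, the rows $m_0,\dots,m_{q-1}$ are linearly independent, and because tensor powers of independent vectors remain independent, the single-block flattening $F_{[1]}\in\mathbb{C}^{2^\ell\times 2^{\ell(n-1)}}$, with rows indexed by the value of block $1$ and columns by the values of blocks $2,\dots,n$, equals $\sum_c m_c\,(m_c^{\otimes(n-1)})^{\top}$ and therefore has rank \emph{exactly} $q$. So it suffices to prove that this rank is at most $2$ whenever $F$ is a matchgate signature.

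The structural heart of the argument is the claim that \textbf{every blockwise symmetric matchgate signature with at least three blocks has all single-block flattenings of rank at most $2$}. Note that $2^\ell\ge q\ge 3$ forces $\ell\ge 2$, so this is a genuine constraint (for $\ell=1$ the flattening trivially has only two rows). I would prove it from the matchgate identities together with the block symmetry: these are quadratic Grassmann--Pl\"ucker/Pfaffian-type relations, of the form $\sum_{k}(-1)^{k-1}F_{\alpha\oplus e_{p_k}}F_{\beta\oplus e_{p_k}}=0$ where $\{p_1<\cdots<p_s\}$ is the set of positions on which $\alpha,\beta$ differ. Choosing $\alpha,\beta$ to differ only inside block $1$ produces relations among the rows $u_a:=F(a,\cdot)$ of $F_{[1]}$ indexed by block-$1$ values $a$ related by single bit-flips; the symmetry under permuting blocks then makes these relations uniform across the remaining blocks and, as in the classical second-order recurrence for symmetric matchgate signatures, collapses the row space $\operatorname{span}\{u_a\}$ onto at most two generators. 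Concretely I expect to show that any three rows $u_a,u_b,u_d$ are linearly dependent, equivalently that every $3\times 3$ minor of $F_{[1]}$ vanishes. Once the structural claim is in hand the proof finishes immediately: $q=\operatorname{rank}F_{[1]}\le 2$ contradicts $q\ge 3$.

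The main obstacle is proving the rank bound cleanly for all $\ell\ge 2$ and all $n\ge 3$. Two complications deserve care. The parity condition (a matchgate signature is supported on a single parity class of Hamming weight) interacts with the block structure, so I would first record which matchgate identities are forced to be nontrivial and organize the argument by parity class; one may also pin blocks $4,\dots,n$ to a common value (pinning each bit is matchgate-realizable and preserves block symmetry among the surviving blocks) to reduce, when convenient, to the three-block case, being careful to keep enough surviving terms to retain rank $q$. The second complication is the usual degeneracy of Pfaffian normalization: when the entry one would normalize by vanishes, the second-order recurrence must be derived in a basis-free way directly from the identities rather than from an explicit skew-symmetric matrix. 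I expect essentially all the difficulty to be concentrated in turning ``block symmetry $+$ matchgate identities'' into the rank-$2$ conclusion; the reduction in the first paragraph and the final contradiction are routine.
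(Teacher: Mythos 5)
Your first paragraph is correct and coincides with the paper's reduction: $(=_n)M^{\otimes n}$ is blockwise symmetric, and its single-block flattening is $M^{T}M(=_n)M^{\otimes(n-1)}$, which has rank exactly $q\ge 3$ since $M$ has full row rank. But the ``structural heart'' you then state --- every blockwise symmetric matchgate signature with $n\ge 3$ blocks has single-block flattening of rank at most $2$ --- is precisely the paper's main technical theorem (Theorem~\ref{mainlemma}), and your proposal does not prove it: everything after ``I would prove it from the matchgate identities'' is a plan, and you concede yourself that all the difficulty is concentrated there. The concrete step that fails is the passage from MGI to linear dependence of rows. MGI are \emph{quadratic} relations among entries, so choosing patterns and position vectors supported in block $1$ yields quadratic identities between products of entries, not linear relations among the rows $u_a$, and there is no direct route from these to ``any three rows are linearly dependent'' or to the vanishing of all $3\times 3$ minors. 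The paper does not attempt that route. Instead it invokes minimality lemmas from Cai--Fu (Lemmas~\ref{le-col-2-1} and~\ref{le-col-3-1}, choosing linearly independent row and column pairs of minimal Hamming distance, with matching parity) to show that rank $\ge 3$ would force a full-rank $2\times 2$ submatrix in one of two normal forms: row indices differing by $e_i+e_j$ inside block $1$, and column indices differing either by $e_s+e_t$ inside block $2$, or by $e_s$ and $e_t$ split across blocks $2$ and $3$. Each normal form is then killed by a bespoke MGI computation (Lemmas~\ref{xxx-1} and~\ref{xxx-2}): for the split form, two MGI instantiations related by transposing blocks $2$ and $3$ are added so that the cross terms cancel by block symmetry; for the one-block form, one first shows by contradiction (using Corollary~\ref{linear-dependent-coro}) that a family of entries vanishes, and then applies an MGI whose position vector spans whole blocks, namely $(e_i+e_j)(\alpha_2+\alpha_3)(\alpha_2+\alpha_3+e_s+e_t)\mathbf{0}\cdots\mathbf{0}$. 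None of this machinery is supplied by your sketch.

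Two further signs the sketch as written would not go through. First, your analogy with the ``classical second-order recurrence'' for symmetric matchgate signatures cannot carry the argument unchanged, because the rank bound is \emph{false} with two blocks: the paper exhibits a blockwise symmetric matchgate signature $\Gamma_1$ with $n=2$, $\ell=2$ whose flattening has rank $4$, so any correct proof must use the third block essentially --- which is exactly what the paper's three-block MGI cancellations do. Second, your optional pinning of blocks $4,\dots,n$ need not ``retain rank $q$'': for $F=\sum_c m_c^{\otimes n}$, pinning yields $\sum_c \lambda_c\, m_c^{\otimes 3}$ where $\lambda_c$ is a product of pinned coordinates of $m_c$, and if the rows $m_c$ have pairwise disjoint supports, every choice of a common pinned value leaves at most one nonzero $\lambda_c$. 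Your reduction and final contradiction are right and match the paper's, but as it stands the proposal assumes the paper's main theorem rather than proving it.
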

This result has the consequence that
the standard strategy  that is universal for \#CSP on the Boolean domain, cannot work in any higher domain
$[q]$, where $q\geq 3$.
In the proof of Theorem~\ref{equality-no-matchgate}, a main tool is
the use of Matchgate Identities (MGI).
These are a set of polynomial equations on the
entries of a matchgate realizable constraint function (signature)
and they have deep relations to some branches of mathematics and complexity theory.
In \cite{c-l-collapse, sitan, mingji}, MGI were used to prove the  basis collapse theorems and many
techniques were developed.
In the present paper, we contribute some new techniques to use MGI.

Theorem~\ref{equality-no-matchgate} is a part of our effort
to achieve a full structural characterization of
 blockwise symmetric matchgate signatures.
This improves upon the work by Cai and Lu~\cite{cailu10}.
Our characterization theorem  will be stated in Theorem~\ref{fu'stheorem1},
after we have defined all the necessary terminology.
Compared to the theorem in \cite{cailu10} we eliminate a
non-vanishing condition and improve the characterization
 to all arity $n \ge 3$.
This improvement makes  Theorem~\ref{equality-no-matchgate}
above possible.
We also give a counter example to indicate that Theorem~\ref{fu'stheorem1}
cannot be further improved to $n=2$.
So we have achieved  a complete characterization of the
structure of blockwise symmetric matchgate signatures.

Beyond \#CSP,
there is a framework called Holant problems for locally specified
counting problems, of  which  \#CSP is a special case.
In \cite{Cai-Fu-Guo-W}, it is proved that, the analogous universality statement
for Holant problems over the Boolean domain is \emph{not true}:
In addition to holographic reductions to matchgates,
there is another set of problems (expressible in the Holant
framework but not in \#CSP) that is \#P-hard in general but P-time tractable
over planar graphs. Moreover they cannot be transformed to
matchgates, and the P-time algorithm is not the type discussed
in this paper.  Thus the present paper also highlights the
distinction and importance of Holant problems, beyond
\#CSP. It is an open question how to classify
 Holant problems for higher domains, especially those
problems that are  \#P-hard in general but P-time tractable
over planar graphs.

\section{Preliminaries}
In the following, we always use
$e_i$ to denote the string with 1 in the $i$-th bit and 0 elsewhere (of some length $n$)and use $\mathbf{0}$ to denote the $\ell$-bit string $00\cdots 0$.

Let $G=(V, E, W)$ be a weighted undirected planar graph. A {\it matchgate} $\Gamma$ is a tuple $(G, X)$ where $X\subset V$ is a set of external
nodes, ordered counterclockwise on the external face. $\Gamma$ is called an odd (resp. even) matchgate if it has an odd (resp. even)
number of nodes.

Each matchgate $\Gamma$ with $n$ external nodes is assigned a {\it matchgate signature} $(\Gamma^{\alpha})_{\alpha\in\{0, 1\}^{n}}$ with $2^{n}$ entries,
\begin{equation*}
\Gamma^{i_{1}i_{2}\cdots i_{n}}=\textrm{PerfMatch}(G-Z)=\displaystyle\sum_{M}\prod_{(i, j)\in M}w_{ij},
\end{equation*}
where the sum is over all perfect matchings $M$ of $G-Z$, and $Z\subset X$ is a subset of external nodes having the characteristic
sequence $\chi_{Z}=i_{1}i_{2}\cdots i_{n}$ and $G-Z$ is obtained from $G$ by removing $Z$ and its incident edges.

An entry $\Gamma^{\alpha}$ is called an even (resp. odd) entry if the Hamming weight wt$(\alpha)$ is even (resp. odd). It was proved in \cite{Matchgates-Reconsidered}
that matchgate signatures are characterized by the following two sets of conditions. (1) The parity requirements: either all
even entries are 0 or all odd entries are 0. This is due to perfect matchings. (2)
A set of Matchgate Identities (MGI) defined as
follows:
For any $\alpha\in\{0, 1\}^n$ and any position vector $P=\{p_1, p_2, \cdots, p_{\ell}\}$, where $p_{1}<p_{2}< \cdots<p_{\ell}$,
also denoted by a bit string,
%
\begin{equation}
\displaystyle\sum_{i=1}^{\ell}(-1)^{i}\Gamma^{\alpha+e_{p_{i}}}\Gamma^{\alpha+P+e_{p_{i}}}=0
\end{equation}
(alternating sum by flipping in sequence the bits $p_i$
and the bits in $P\setminus \{p_i\}$), where $\alpha+\beta$ denotes the XOR of $\alpha$ and $\beta$.
Actually in \cite{Matchgates-Reconsidered} it is shown that
MGI implies the Parity Condition. But in practice, it is easier
to apply the Parity Condition first.
We use $\mathscr{M}$ to denote the set of matchgate signatures.

Now we introduce Holant problems over domain $[q]$. Fix a set $\mathcal{F}$ of  local constraint
functions, a.k.a. signatures. 
A \emph{signature grid} $\Omega = (G, \pi)$ consists of a graph $G = (V,E)$,
and a mapping $\pi$ which maps each vertex $v \in V$ to
some $f_v \in \mathcal{F}$ on domain $\{0, 1, \cdots, q-1\}$ of arity $\deg(v)$ ($=$ the
number of input variables of $f_v$),
and associates its incident edges $E(v)$ to the input variables of $f_v$.
We say that $\Omega$ is a \emph{planar signature grid} if $G$ is a plane graph,
where the variables of $f_v$ are ordered counterclockwise starting from an edge specified by $\pi$.
The Holant problem on instance $\Omega$ is to evaluate
\[\operatorname{Holant}(\Omega; \mathcal{F}) = \sum_{\sigma: E \to \{0,1, \cdots, q-1\}}
\prod_{v \in V} f_v(\sigma \mid_{E(v)}),\]
where $\sigma \mid_{E(v)}$ denotes the restriction of $\sigma$ to $E(v)$.
A Holant problem is parameterized by a set of signatures.
 Given a set of signatures $\mathcal{F}$,
  the counting problem Holant$(\mathcal{F})$ is as follows:
The input is a \emph{signature grid} $\Omega = (G, \pi)$; the
 output is Holant$(\Omega; \mathcal{F})$.
The problem Pl-Holant$(\mathcal{F})$ is defined similarly using a planar signature grid.
For example, for domain size $q=2$, if we place the {\sc Exact-One} function at every vertex,
then Pl-Holant$(\Omega; \mathcal{F})$ counts the
number of {\sc Perfect Matchings}. Note that this problem can be computed in polynomial time by
the Kasteleyn's algorithm (a.k.a. the  FKT algorithm) \cite{Kasteleyn1961, Kasteleyn1967, TF61}.
Moreover, if $\mathcal{F}\subseteq\mathscr{M}$, then Pl-Holant$(\mathcal{F})$ can be computed in polynomial time.
We use $(=_n)$ to denote the $n$-ary {\sc Equality} signature, i.e., the signature takes 1 on the input $x_1=x_2=\cdots=x_n$ and 0 on other inputs.

To define holographic reductions,
we use Holant$(\mathcal{F}|\mathcal{G})$ to denote the Holant problem over signature grids with a bipartite graph $G= (U,V,E)$,
where each vertex in $U$ or $V$ is assigned a signature in $\mathcal{F}$ or $\mathcal{G}$,
respectively.
Signatures in $\mathcal{F}$ are given as
row vectors listing the function values like
truth tables;
 signatures in $\mathcal{G}$ are given as column vectors.
We use
Pl-Holant$(\mathcal{F}|\mathcal{G})$ to denote
 the Holant problem over planar bipartite graphs.
Counting constraint satisfaction problems (\#CSP)
can be defined as a special case of Holant problems.
An instance of \#CSP$(\mathcal{F})$ is presented
as a bipartite graph.
There is one node for each variable on LHS and for each occurrence
of constraint functions on RHS respectively.
Connect a constraint node to  a variable node if the
variable appears in that occurrence
of constraint, with a labeling on the edges
for the order of these variables.
This bipartite graph is also known as the \emph{constraint graph}.
If we attach each variable node with an {\sc Equality} signature,
and consider every edge as a variable, then
the \#CSP problem is just the Holant problem on this bipartite graph.
Thus
\#CSP$(\mathcal{F}) \equiv_T \operatorname{Holant}(\mathcal{EQ}|\mathcal{F})$,
where $\mathcal{EQ} = \{{=}_1, {=}_2, {=}_3, \dotsc\}$ is the set of \textsc{Equality} signatures of all arities.
By restricting to planar constraint graphs,
we have Pl-\#CSP.

Now we define holographic transformations.
Let $M$ be a $q\times 2^{\ell}$ matrix of rank $q$, then there exists a $2^{\ell}\times q$ matrix $\check{M}$ of rank $q$
such that $M\check{M}=I_{q}$.
%
For a signature $f$ of arity $n$ on domain $\{0, 1, \cdots, q-1\}$, written as
a column vector  $f \in \mathbb{C}^{q^n}$, we denote by
$\check{M}f = (\check{M})^{\otimes n} f$ the transformed signature.
  For a signature set $\mathcal{F}$,
define $\check{M} \mathcal{F} = \{\check{M}f \mid  f \in \mathcal{F}\}$.
For signatures written as
 row vectors we define $\mathcal{F} M$ similarly.
In particular, for $q=2$, let
$H_2 = \frac{1}{\sqrt{2}} \left[\begin{smallmatrix} 1 & 1 \\ 1 & -1 \end{smallmatrix}\right]$
be the Hadamard matrix and we denote
 $\widehat{\mathcal{F}} = H_2  \mathcal{F}$.
Note that $(H_2)^{-1}=H_2$ and $\mathcal{EQ}\subseteq\widehat{\mathscr{M}}$.
This fact, that under the Hadamard transformation all {\sc Equalities'} become matchgate signatures is proved to be the universal reason
that \#CSP
problems that are \#P-hard can become P-time tractable on planar structures. This is proved in
\cite{cai-fu-plcsp}.

The holographic transformation defined by $M$ is the following operation:
given a signature grid $\Omega = (G, \pi)$ of Holant$(\mathcal{F}|\mathcal{G})$,
for the same bipartite graph $G$,
we get a new signature grid $\Omega'= (G, \pi')$ of Holant$(\mathcal{F} M|\check{M} \mathcal{G})$ by replacing each signature in
$\mathcal{F}$ or $\mathcal{G}$ with the corresponding signature in $\mathcal{F} M$ or $\check{M} \mathcal{G}$.
Then we have the following theorem.
\begin{theorem}\label{valiant}\rm {(Valiant's Holant Theorem\cite{string23})}.
Let $\mathcal{F}, \mathcal{G}$ be signature sets on domain $\{0, 1, \cdots, q-1\}$ and $M$ be a $q\times 2^{\ell}$ matrix of rank $q$, then
\begin{center}
$\operatorname{Holant}(\mathcal{F}|\mathcal{G})=\operatorname{Holant}(\mathcal{F} M|\check{M} \mathcal{G})$.
\end{center}
\end{theorem}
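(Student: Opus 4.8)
The plan is to prove the identity by exhibiting $\operatorname{Holant}$ as a tensor contraction over the edges of the bipartite grid and inserting the factorization $M\check{M}=I_q$ on every edge, then reassociating the two factors onto the two sides.

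First I would set up notation. For a bipartite signature grid $\Omega=(G,\pi)$ with $G=(U,V,E)$, each $u\in U$ carries a row vector $f_u\in\mathcal{F}$ and each $v\in V$ a column vector $g_v\in\mathcal{G}$, of arities $\deg u,\deg v$ over $[q]$, so that
\[
\operatorname{Holant}(\mathcal{F}|\mathcal{G})=\sum_{\sigma:E\to[q]}\ \prod_{u\in U} f_u\bigl(\sigma|_{E(u)}\bigr)\ \prod_{v\in V} g_v\bigl(\sigma|_{E(v)}\bigr).
\]
On the transformed side, $\mathcal{F}M$ consists of the row vectors $f_u M^{\otimes\deg u}$ and $\check{M}\mathcal{G}$ of the column vectors $\check{M}^{\otimes\deg v} g_v$, each now a Boolean signature in which every original edge is replaced by a cable of $\ell$ bits; thus $\operatorname{Holant}(\mathcal{F}M|\check{M}\mathcal{G})$ is the same expression with $\sigma$ replaced by a Boolean assignment $\tau:E\to\{0,1\}^{\ell}$.

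The key step is the edge-wise insertion of the identity. Since $M\check{M}=I_q$, for every edge $e$ we have $\delta_{x x'}=\sum_{b\in\{0,1\}^{\ell}}M_{x b}\,\check{M}_{b x'}$. Substituting this on each edge of $\operatorname{Holant}(\mathcal{F}|\mathcal{G})$ — using the variable $x_e$ at the $U$-endpoint and $x_e'$ at the $V$-endpoint — turns the single sum over $\sigma$ into a sum over $\{x_e\},\{x_e'\}$ and the auxiliary bits $\{b_e\}$, weighted by $\prod_e M_{x_e b_e}\check{M}_{b_e x_e'}$. I would then carry out the sums over $\{x_e\}$ and $\{x_e'\}$ first, grouped by vertex: the $M$-factors on the edges incident to a fixed $u$ collapse $\sum_{\{x_e\}}f_u(\{x_e\})\prod_{e\ni u}M_{x_e b_e}$ into $(f_u M^{\otimes\deg u})(\{b_e\}_{e\ni u})$, and symmetrically the $\check{M}$-factors turn $g_v$ into $(\check{M}^{\otimes\deg v}g_v)(\{b_e\}_{e\ni v})$. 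What remains is exactly $\sum_{\{b_e\}}\prod_u(f_uM)\prod_v(\check{M} g_v)=\operatorname{Holant}(\mathcal{F}M|\check{M}\mathcal{G})$, which is the claim. (For the planar version one simply replaces each edge by $\ell$ parallel wires in a thin strip, which preserves planarity and the counterclockwise ordering.)

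The part I expect to require the most care is the bookkeeping of this reassociation together with the fact that $M$ is rectangular. Because $M$ is $q\times 2^{\ell}$ with $q\le 2^{\ell}$, we only have the one-sided identity $M\check{M}=I_q$ (in general $\check{M} M\neq I_{2^{\ell}}$), so the argument is genuinely a contraction identity rather than an invertible change of basis; I must insert $M\check{M}$ (never $\check{M} M$) on each edge and keep straight which endpoint receives $M$ and which receives $\check{M}$, consistent with the row/column conventions for $\mathcal{F}$ and $\mathcal{G}$. Verifying that the vertex-wise grouping is legitimate — i.e.\ that each edge is incident to exactly one $U$-vertex and one $V$-vertex so the $M$- and $\check{M}$-factors separate cleanly — is where the bipartite structure of $\operatorname{Holant}(\mathcal{F}|\mathcal{G})$ is essential, and is the crux of a fully rigorous write-up.
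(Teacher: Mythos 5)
Your proposal is correct and is essentially the standard proof of the Holant Theorem: the paper itself gives no proof, citing Valiant, and the cited argument is exactly this edge-wise insertion of $M\check{M}=I_q$ followed by reassociating the $M$-factors onto the $\mathcal{F}$-side and the $\check{M}$-factors onto the $\mathcal{G}$-side, with the bipartite structure guaranteeing the clean separation. Your closing observations---that only the one-sided identity $M\check{M}=I_q$ is available for rectangular $M$, and that each original edge becomes a cable of $\ell$ Boolean wires (preserving planarity for the Pl-Holant version)---are precisely the points that make the rectangular-basis version of the theorem go through.
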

By Theorem~\ref{valiant} and $\mathcal{EQ}\subseteq\widehat{\mathscr{M}}$, if $\mathcal{F}\subseteq\widehat{\mathscr{M}}$, then
Pl-\#CSP$(\mathcal{F})$ can be computed in polynomial time by the holographic transformation using $H_2$ and the FKT algorithm.
More precisely, we have the following complexity trichotomy
 theorem
 for \#CSP on the Boolean domain \cite{Cai-Lu-Xia-real, Guo-Williams, cai-fu-plcsp}:
{\it every single problem} in this class can be classified into one of three types.
The first type  can be
solved in polynomial time over arbitrary structures.
The second type consists of problems that are
 \#P-hard over general structures, but solvable in
polynomial time over planar structures.
The third type problems are those which remain \#P-hard
over planar structures.
In this trichotomy
 theorem, the second type of problems are precisely captured by $\widehat{\mathscr{M}}$.

\begin{definition}{\rm
  A signature $f=(f^{i_1i_2\cdots i_n})$, where each $i_j\in[q]$, is {\it symmetric} if $f$ is invariant under
  any permutation of $\{i_1, i_2, \cdots, i_n\}$ i.e.,
$f^{ i_{1}i_2\cdots i_{n}}=
f^{i_{\sigma(1)}i_{\sigma(2)}\cdots i_{\sigma(n)}}$
for any permutation $\sigma$ of  $\{1, 2 \cdots, n\}$.}
\end{definition}
For example, the {\sc Equality} signatures are symmetric.

\begin{definition}
{\rm
For an $n$-ary symmetric signature $f$ over domain $[q]$,
its matrix form $M(f)$
 is a $q\times q^{n-1}$ matrix. Its rows are
indexed by  $i_{1}$ and its columns are indexed
by $i_{2}i_{3}\cdots i_{n}$.}
\end{definition}

For a string
$\alpha=(i_1i_2\cdots i_{\ell})(i_{\ell+1}i_{\ell+2}\cdots i_{2\ell})\cdots(i_{(n-1)\ell+1}i_{(n-1)\ell+2}\cdots i_{n\ell})\in\{0, 1\}^{n\ell}$,
 we call $\alpha_j=i_{(j-1)\ell+1}i_{(j-1)\ell+2}\cdots i_{j\ell}$, $1\leq j\leq n$, blocks in size-$\ell$
 and
 call $\alpha=\alpha_1\alpha_2\cdots\alpha_n$
 the blockwise form  of $\alpha$ in size-$\ell$.

\begin{definition}{\rm
An $n\ell$-ary signature $\Gamma$ over the Boolean domain is blockwise symmetric in size-$\ell$ if
$\Gamma=(\Gamma^{\alpha_1\alpha_2\cdots\alpha_n})$, where each $\alpha_{i}\in\{0, 1\}^{\ell}$,  is
  invariant under any permutation of $\{\alpha_1, \alpha_2, \cdots, \alpha_n\}$ i.e.,
$\Gamma^{\alpha_{1}\alpha_2\cdots\alpha_{n}}=
\Gamma^{\alpha_{\sigma(1)}\alpha_{\sigma(2)}\cdots\alpha_{\sigma(n)}}$
for any permutation $\sigma$ of  $\{1, 2 \cdots, n\}$.}
In particular, if $\ell=1$, we say $\Gamma$ is bitwise symmetric.
\end{definition}
If it is clear in the context, we will omit size-$\ell$.

\begin{definition}
{\rm
The {\it matrix form} $M(\Gamma)$
of the blockwise symmetric signature $\Gamma=(\Gamma^{\alpha_{1}\alpha_{2}\cdots \alpha_{n}})$ of arity $n\ell$
 is a $2^{\ell}\times 2^{(n-1)\ell}$ matrix. Its rows are
indexed by  $\alpha_{1}$ and its columns are indexed
by $\alpha_{1}\alpha_{2}\cdots \alpha_{n}$.}
\end{definition}
For example, let $\Gamma=(\Gamma^{\alpha_{1}\alpha_{2}})$, where $n=2, \ell=2$, then
$M(\Gamma)=\left(\begin{smallmatrix}
\Gamma^{0000}& \Gamma^{0001}& \Gamma^{0010}& \Gamma^{0011}\\
\Gamma^{0100}& \Gamma^{0101}& \Gamma^{0110}& \Gamma^{0111}\\
\Gamma^{1000}& \Gamma^{1001}& \Gamma^{1010}& \Gamma^{1011}\\
\Gamma^{1100}& \Gamma^{1101}& \Gamma^{1110}& \Gamma^{1111}
\end{smallmatrix}\right)$.
We denote the $\alpha$-th row of $M(\Gamma)$ by $M(\Gamma)^{\alpha}$ for $\alpha\in\{0, 1\}^{\ell}$.

The following is a simple lemma from Linear Algebra.
\begin{lemma}\label{rank}
Let $A, B, C$ be $m\times n, n\times s, s\times t$ matrices respectively, where rank$(A)=n$, rank$(C)=s$,
then rank$(AB)=$rank$(B)$, rank$(BC)=$rank$(B)$.
\end{lemma}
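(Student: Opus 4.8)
The plan is to prove the two equalities separately, reducing the second to the first by transposition. I will lean on two elementary facts. The first is the universal rank inequality $\operatorname{rank}(XY)\le\min\bigl(\operatorname{rank}(X),\operatorname{rank}(Y)\bigr)$, valid for any conformable matrices; this already supplies the ``$\le$'' direction in both claims, since $\operatorname{rank}(AB)\le\operatorname{rank}(B)$ and $\operatorname{rank}(BC)\le\operatorname{rank}(B)$. The second is the existence of one-sided inverses for matrices of full column or full row rank, which will furnish the matching lower bounds.

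For $\operatorname{rank}(AB)=\operatorname{rank}(B)$, I would first observe that since $A$ is $m\times n$ with $\operatorname{rank}(A)=n$, it has full column rank, so it admits a left inverse: there is an $n\times m$ matrix $A'$ with $A'A=I_n$ (for instance because some $n\times n$ submatrix of $A$ is invertible, which makes the construction work over any field and in particular for algebraic complex entries). Then
\[
B=I_nB=(A'A)B=A'(AB),
\]
so by the universal inequality $\operatorname{rank}(B)=\operatorname{rank}\bigl(A'(AB)\bigr)\le\operatorname{rank}(AB)$. Combined with $\operatorname{rank}(AB)\le\operatorname{rank}(B)$, this gives the first equality.

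For $\operatorname{rank}(BC)=\operatorname{rank}(B)$, I would transpose: $\operatorname{rank}(BC)=\operatorname{rank}\bigl((BC)^{\!\top}\bigr)=\operatorname{rank}\bigl(C^{\!\top}B^{\!\top}\bigr)$, where $C^{\!\top}$ is $t\times s$ with $\operatorname{rank}(C^{\!\top})=s$, i.e.\ full column rank. Applying the first part with $C^{\!\top}$ in the role of $A$ and $B^{\!\top}$ in the role of $B$ yields $\operatorname{rank}(C^{\!\top}B^{\!\top})=\operatorname{rank}(B^{\!\top})=\operatorname{rank}(B)$. (Equivalently, one can argue directly: since $C$ has full row rank $s$ it has a right inverse $C'$ with $CC'=I_s$, whence $B=(BC)C'$ and $\operatorname{rank}(B)\le\operatorname{rank}(BC)$.)

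As for obstacles, there is none of real substance—the statement is standard linear algebra and the whole argument should occupy only a few lines. The single point deserving a word of justification is the existence of the one-sided inverses, but this is immediate from the full-rank hypotheses (full column rank $\Rightarrow$ left inverse, full row rank $\Rightarrow$ right inverse); everything else is just the universal rank inequality together with transposition.
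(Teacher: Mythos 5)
Your proof is correct, and there is nothing to compare it against: the paper states this lemma without proof, simply calling it ``a simple lemma from Linear Algebra.'' Your argument---the universal inequality $\operatorname{rank}(XY)\le\min(\operatorname{rank}(X),\operatorname{rank}(Y))$ for the upper bounds, a left inverse of the full-column-rank $A$ (and transposition, or a right inverse of $C$) for the lower bounds---is the standard complete justification the paper implicitly relies on.
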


\begin{lemma}\label{symmetry}
Let $fM^{\otimes n}=\Gamma$, where $f$ is an n-ary signature on domain $[q]$,
$M$ is a $q\times 2^{\ell}$ matrix and $\Gamma=(\Gamma^{\alpha_1\alpha_2\cdots\alpha_n})$ is an $n\ell$-ary signature where
$\alpha_i\in\{0, 1\}^{\ell}$ for $1\leq i\leq n$.
If $f$ is symmetric, then $\Gamma$ is blockwise symmetric.
\end{lemma}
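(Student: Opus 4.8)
The plan is to write out the entries of $\Gamma = fM^{\otimes n}$ explicitly and then exploit the product structure of the tensor power $M^{\otimes n}$, which lets a permutation of the output blocks $\alpha_j$ be absorbed into a relabelling of the summation indices of $f$. The only hypothesis we need is the symmetry of $f$, applied after that relabelling.

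First I would fix the indexing. Writing $M_{i,\alpha}$ for the entry of $M$ in row $i\in[q]$ and column $\alpha\in\{0,1\}^{\ell}$, the tensor power $M^{\otimes n}$ has, in row $(i_1,\dots,i_n)$ and column $(\alpha_1,\dots,\alpha_n)$, the entry $\prod_{j=1}^{n}M_{i_j,\alpha_j}$. Hence, reading $f$ as a row vector indexed by $(i_1,\dots,i_n)\in[q]^n$, the block entry of $\Gamma$ is
\[
\Gamma^{\alpha_1\alpha_2\cdots\alpha_n}=\sum_{i_1,\dots,i_n\in[q]} f^{i_1 i_2\cdots i_n}\prod_{j=1}^{n}M_{i_j,\alpha_j}.
\]

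Next, fix an arbitrary permutation $\sigma$ of $\{1,\dots,n\}$ and compute the permuted entry by substituting $\alpha_{\sigma(j)}$ for $\alpha_j$ in the formula above:
\[
\Gamma^{\alpha_{\sigma(1)}\cdots\alpha_{\sigma(n)}}=\sum_{i_1,\dots,i_n} f^{i_1\cdots i_n}\prod_{j=1}^{n}M_{i_j,\alpha_{\sigma(j)}}=\sum_{i_1,\dots,i_n} f^{i_1\cdots i_n}\prod_{k=1}^{n}M_{i_{\sigma^{-1}(k)},\alpha_k},
\]
where the last equality just reindexes the product by $k=\sigma(j)$. I would then change the summation variables via $j_k:=i_{\sigma^{-1}(k)}$ (equivalently $i_m=j_{\sigma(m)}$), a bijection of $[q]^n$ onto itself; the product becomes $\prod_{k}M_{j_k,\alpha_k}$ while $f^{i_1\cdots i_n}=f^{j_{\sigma(1)}\cdots j_{\sigma(n)}}$. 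Now the symmetry of $f$ gives $f^{j_{\sigma(1)}\cdots j_{\sigma(n)}}=f^{j_1\cdots j_n}$, so the sum collapses back to $\Gamma^{\alpha_1\cdots\alpha_n}$, which is exactly blockwise symmetry.

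The computation is entirely routine; the only thing to watch is the bookkeeping between permuting output blocks and relabelling summation indices, and in particular that the two appearances of $\sigma$ (on the $\alpha$'s, then inverted on the $i$'s) match up. A conceptual reformulation avoiding indices altogether is to let $P_\sigma$ denote the operator permuting the $n$ tensor slots on either $(\mathbb{C}^{q})^{\otimes n}$ or $(\mathbb{C}^{2^{\ell}})^{\otimes n}$. Because $M^{\otimes n}$ is a tensor power it is equivariant, $P_\sigma M^{\otimes n}=M^{\otimes n}P_\sigma$, and symmetry of $f$ reads $fP_\sigma=f$ for every $\sigma$; composing gives $\Gamma P_\sigma=fM^{\otimes n}P_\sigma=fP_\sigma M^{\otimes n}=fM^{\otimes n}=\Gamma$. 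I expect no real obstacle: the entire content is that the symmetric-group action on the tensor slots commutes with applying the same matrix $M$ in every slot.
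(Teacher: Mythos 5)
Your proof is correct and takes essentially the same route as the paper's: expand $\Gamma^{\alpha_1\cdots\alpha_n}=\sum_{i_1,\dots,i_n}f^{i_1\cdots i_n}\prod_j M_{i_j,\alpha_j}$, relabel the summation indices under $\sigma$, and apply the symmetry of $f$ --- your explicit substitution $j_k=i_{\sigma^{-1}(k)}$ merely makes precise the paper's remark that the product of matrix entries is just permuted. The closing equivariance argument $\Gamma P_\sigma = fP_\sigma M^{\otimes n} = \Gamma$ is a tidy coordinate-free restatement of the same computation, not a different proof.
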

\begin{proof}
Let $M=(M_i^{\alpha})$, where $i\in [n]$ is the index of rows and $\alpha\in\{0, 1\}^{\ell}$
is the index of columns.
From $\Gamma=fM^{\otimes n}$, we have
$\Gamma^{\alpha_1\alpha_2\cdots\alpha_n}=\displaystyle\sum_{i_1, i_2, \cdots, i_n\in[q]}
f^{i_1i_2\cdots i_n}M_{i_1}^{\alpha_1}M_{i_2}^{\alpha_2}\cdots M_{i_n}^{\alpha_n}.$
Then for any permutation $\sigma$ of $\{1, 2, \cdots, n\}$,  we have
\[\Gamma^{\alpha_{\sigma(1)}\alpha_{\sigma(2)}\cdots\alpha_{\sigma(n)}}=\displaystyle\sum_{i_{\sigma(1)}, i_{\sigma(2)}, \cdots, i_{\sigma(n)}\in[q]}
f^{i_{\sigma(1)}i_{\sigma(2)}\cdots i_{\sigma(n)}}M_{i_{\sigma(1)}}^{\alpha_{\sigma(1)}}M_{i_{\sigma(2)}}^{\alpha_{\sigma(2)}}\cdots M_{i_{\sigma(n)}}^{\alpha_{\sigma(n)}}.\]
Note that $f^{i_{\sigma(1)}i_{\sigma(2)}\cdots i_{\sigma(n)}}=f^{i_1i_2\cdots i_n}$ since $f$ is symmetric, and
$M_{i_1}^{\alpha_1}M_{i_2}^{\alpha_2}\cdots M_{i_n}^{\alpha_n}=M_{i_{\sigma(1)}}^{\alpha_{\sigma(1)}}M_{i_{\sigma(2)}}^{\alpha_{\sigma(2)}}\cdots M_{i_{\sigma(n)}}^{\alpha_{\sigma(n)}}$
since $\{M_{i_{\sigma(1)}}^{\alpha_{\sigma(1)}}, M_{i_{\sigma(2)}}^{\alpha_{\sigma(2)}}, \cdots,  M_{i_{\sigma(n)}}^{\alpha_{\sigma(n)}}\}$
is just a permutation of $\{M_{i_1}^{\alpha_1}, M_{i_2}^{\alpha_2}, \cdots,  M_{i_n}^{\alpha_n}\}$.
Thus $\Gamma^{\alpha_1\alpha_2\cdots\alpha_n}=\Gamma^{\alpha_{\sigma(1)}\alpha_{\sigma(2)}\cdots\alpha_{\sigma(n)}}.$
This implies that $\Gamma$ is blockwise symmetric.
\end{proof}

If  $fM^{\otimes n}=\Gamma$, where $f$ is an $n$-ary symmetric signature on domain $[q]$ and $M$ is a $q\times 2^{\ell}$ matrix, then
$\Gamma=(\Gamma^{\alpha_1\alpha_2\cdots\alpha_n})$, where
$\alpha_i\in\{0, 1\}^{\ell}$ for $1\leq i\leq n$, is a
blockwise symmetric signature by Lemma~\ref{symmetry}. Then we have the following lemma.

\begin{lemma}(Lemma~2.2 of \cite{Cai-Fu})\label{signature-matrix}
 $M(\Gamma)=M^T M(f)M^{\otimes (n-1)}$, where $M^T$ is the transpose of $M$.
\end{lemma}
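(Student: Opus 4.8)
The plan is to verify the claimed matrix identity entry by entry, since both sides are $2^{\ell}\times 2^{(n-1)\ell}$ matrices whose rows are indexed by $\alpha_1\in\{0,1\}^{\ell}$ and whose columns are indexed by $\alpha_2\alpha_3\cdots\alpha_n\in\{0,1\}^{(n-1)\ell}$. First I would recall from the proof of Lemma~\ref{symmetry} the explicit expansion
\[
\Gamma^{\alpha_1\alpha_2\cdots\alpha_n}=\sum_{i_1,i_2,\cdots,i_n\in[q]}f^{i_1i_2\cdots i_n}M_{i_1}^{\alpha_1}M_{i_2}^{\alpha_2}\cdots M_{i_n}^{\alpha_n},
\]
which by definition of the matrix form is exactly the $(\alpha_1,\ \alpha_2\cdots\alpha_n)$ entry of the left-hand side $M(\Gamma)$.

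Next I would compute the same entry of the right-hand side $M^T M(f) M^{\otimes(n-1)}$ by unwinding the three index conventions involved. Writing $M=(M_i^{\alpha})$ with $i\in[q]$ indexing rows and $\alpha\in\{0,1\}^{\ell}$ indexing columns, we have $(M^T)_{\alpha_1,i_1}=M_{i_1}^{\alpha_1}$; by the definition of $M(f)$, its $(i_1,\ i_2\cdots i_n)$ entry is $f^{i_1i_2\cdots i_n}$; and by the definition of the tensor power, the $(i_2\cdots i_n,\ \alpha_2\cdots\alpha_n)$ entry of $M^{\otimes(n-1)}$ is the product $M_{i_2}^{\alpha_2}\cdots M_{i_n}^{\alpha_n}$. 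Multiplying these three matrices and summing over the shared inner indices $i_1$ and $i_2\cdots i_n$ yields
\[
\bigl(M^T M(f) M^{\otimes(n-1)}\bigr)_{\alpha_1,\ \alpha_2\cdots\alpha_n}=\sum_{i_1\in[q]}\ \sum_{i_2,\cdots,i_n\in[q]}M_{i_1}^{\alpha_1}\, f^{i_1i_2\cdots i_n}\, M_{i_2}^{\alpha_2}\cdots M_{i_n}^{\alpha_n},
\]
which is identical to the expansion of $\Gamma^{\alpha_1\alpha_2\cdots\alpha_n}$ above. Hence the two matrices agree in every entry.

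The computation itself is routine; the only point requiring care — and the one I would treat as the crux — is the matching of index conventions. The first argument $i_1$ of $f$ plays a distinguished role: it labels the rows of $M(f)$ and therefore must be contracted by $M^T$, producing the single factor $M_{i_1}^{\alpha_1}$ that becomes the row index $\alpha_1$ of $M(\Gamma)$; whereas the remaining $n-1$ arguments $i_2,\ldots,i_n$ label the columns of $M(f)$ and are contracted by $M^{\otimes(n-1)}$, producing the block of factors $M_{i_2}^{\alpha_2}\cdots M_{i_n}^{\alpha_n}$ that becomes the column index $\alpha_2\cdots\alpha_n$. Once this asymmetry between the ``$M^T$ on the left'' and the ``$M^{\otimes(n-1)}$ on the right'' is lined up with the asymmetric row/column indexing of the two matrix forms, the identity follows immediately. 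No appeal to the symmetry of $f$ is needed for the identity as such, although it is precisely the symmetry of $f$, via Lemma~\ref{symmetry}, that makes $\Gamma$ blockwise symmetric and hence makes $M(f)$ and $M(\Gamma)$ the natural objects to compare.
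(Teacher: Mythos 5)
Your verification is correct and coincides with the standard argument: the paper itself states this lemma without proof, citing Lemma~2.2 of \cite{Cai-Fu}, where it is established by precisely this entry-by-entry index computation matching $(M^T)_{\alpha_1,i_1}$, the $(i_1,\,i_2\cdots i_n)$ entry of $M(f)$, and the $(i_2\cdots i_n,\,\alpha_2\cdots\alpha_n)$ entry of $M^{\otimes(n-1)}$ against the expansion of $\Gamma^{\alpha_1\alpha_2\cdots\alpha_n}$. Your two side observations are also accurate: the row index of $M$ ranges over $[q]$ (the ``$i\in[n]$'' in the paper's proof of Lemma~\ref{symmetry} is a typo), and the symmetry of $f$ plays no role in the identity itself, only in making the matrix forms well-suited objects via Lemma~\ref{symmetry}.
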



\section{Equality $(=_n)$ with $n\geq 3$ on domain size $\geq 3$ cannot be realized by matchgates}\label{main-section}
Let $\Gamma=(\Gamma^{\alpha_{1}\alpha_{2}\cdots \alpha_{n}})$, where
$\alpha_i\in\{0, 1\}^{\ell}$ for $1\leq i\leq n$,  be a blockwise symmetric matchgate signature
and $M(\Gamma)$ be the matrix form of $\Gamma$. If rank$(M(\Gamma))\geq 2$, then there exist
$\sigma, \tau\in\{0, 1\}^{\ell}$ that satisfy the following conditions:
\begin{itemize}
\item $M(\Gamma)^{\sigma}$ and  $M(\Gamma)^{\tau}$ are linearly independent,
\item ${\rm wt}(\sigma+\tau)=\displaystyle\min_{ u, v\in\{0, 1\}^{\ell}}\{{\rm wt}(u+ v) \mid M(\Gamma)^{u}$ and $M(\Gamma)^{v}$
  are linearly independent$\}$,
\end{itemize}
where $\sigma+\tau$ is the XOR of the bit strings $\sigma$ and $\tau$.
Moreover, For any
  $\beta=\alpha_1\cdots\alpha_{t-1}\alpha_{t+1}\cdots\alpha_n\in\{0,1\}^{(n-1)\ell}$, let $x_{\beta}=\begin{pmatrix}
\Gamma^{\alpha_1\cdots\alpha_{t-1}\sigma\alpha_{t+1}\cdots\alpha_n}\\
\Gamma^{\alpha_1\cdots\alpha_{t-1}\tau\alpha_{t+1}\cdots\alpha_n}
\end{pmatrix}$. Then there exist $\zeta, \eta$ satisfy the following conditions:

\begin{itemize}
\item $x_{\zeta}$ and $x_{\eta}$ are linearly independent,
\item ${\rm wt}(\zeta+ \eta)=\displaystyle\min_{ u, v\in\{0, 1\}^{(n-1)\ell}}\{{\rm wt}(u+ v) \mid x_{u}$  and $x_{v}$ are linearly independent$\}$.
\end{itemize}
For such $\sigma, \tau, \zeta, \eta$, 
 the following lemma is given in \cite{Cai-Fu}, which uses the properties of Matchgate (MGI).
\begin{lemma}\label{le-col-2-1}
If {\rm rank}$(M(\Gamma))\geq 2$, then ${\rm wt}(\sigma+\tau)=1$ and ${\rm wt}(\zeta+\eta)=1$.
\end{lemma}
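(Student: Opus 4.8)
The plan is to prove both assertions---${\rm wt}(\sigma+\tau)=1$ and ${\rm wt}(\zeta+\eta)=1$---by the same two-step scheme: first extract purely linear-algebraic information from the minimality of the Hamming distance, and then invoke the Matchgate Identities (MGI) to rule out distance $\ge 2$. I begin with the rows. Write $d={\rm wt}(\sigma+\tau)$ and suppose for contradiction that $d\ge 2$. By the choice of $\sigma,\tau$, every pair of rows at distance strictly less than $d$ is linearly dependent. I first observe that any ``interior'' row on a geodesic between $\sigma$ and $\tau$ must vanish: if $u$ satisfies $0<{\rm wt}(u+\tau)<d$ and $0<{\rm wt}(u+\sigma)<d$ with $M(\Gamma)^{u}\neq 0$, then $M(\Gamma)^{u}$ is proportional to both $M(\Gamma)^{\sigma}$ and $M(\Gamma)^{\tau}$, forcing $M(\Gamma)^{\sigma}\parallel M(\Gamma)^{\tau}$ and contradicting independence. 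Hence all rows obtained from $\tau$ by flipping one or two of the bits on which $\sigma$ and $\tau$ differ (other than $\sigma$ itself) are zero rows.

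The second step contradicts $d\ge 2$ using MGI, and here the cases $d=2$ and $d\ge 3$ are treated differently. Since $M(\Gamma)^{\sigma},M(\Gamma)^{\tau}$ are independent, there are columns $\gamma_1,\gamma_2$ with $\Gamma^{\sigma\gamma_1}\Gamma^{\tau\gamma_2}-\Gamma^{\sigma\gamma_2}\Gamma^{\tau\gamma_1}\neq 0$. When $d=2$, say $\sigma+\tau=e_{j_1}+e_{j_2}$ with $j_1,j_2$ in the first block and ${\rm wt}(\gamma_1+\gamma_2)=1$, I apply MGI with $\alpha$ having first block the midpoint $\tau+e_{j_1}$ and $P=\{j_1,j_2,k\}$, where $e_k=\gamma_1+\gamma_2$. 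The three resulting terms are exactly the two products of the $2\times 2$ minor together with one ``corner'' product that is killed by the vanishing of the distance-one rows; MGI then forces the minor to be $0$, a contradiction. When $d\ge 3$ I instead choose $\alpha$ whose first block is $\tau+e_{j_1}$ and whose remaining blocks read a reference column $\gamma^{*}$ with $\Gamma^{\tau\gamma^{*}}\neq 0$, and set $P={\rm supp}\bigl((\sigma+\tau)\,|\,(\gamma'+\gamma^{*})\bigr)$ for an arbitrary target column $\gamma'$; now every term except the one indexed by $j_1$ carries a factor from a distance-one or distance-two interior (hence zero) row, so MGI collapses to $\Gamma^{\tau\gamma^{*}}\Gamma^{\sigma\gamma'}=0$, giving $\Gamma^{\sigma\gamma'}=0$ for every $\gamma'$. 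Thus $M(\Gamma)^{\sigma}=0$, again a contradiction. Either way $d=1$.

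For ${\rm wt}(\zeta+\eta)=1$ the argument is the mirror image, now using that ${\rm wt}(\sigma+\tau)=1$ is already established, so $\sigma+\tau=e_{s}$ for a single bit $s$ in block $t$. By blockwise symmetry I may read $x_{\beta}=(\Gamma^{\sigma\beta},\Gamma^{\tau\beta})^{T}$ with the $\sigma/\tau$-block moved to the front. Writing $d'={\rm wt}(\zeta+\eta)$ and assuming $d'\ge 2$, the same interior-vanishing observation gives $x_{\beta}=0$ for the relevant interior columns $\beta$. For $d'=2$ a midpoint MGI with $P=\{k_1,k_2,s\}$ (the two differing column positions together with the single row position $s$) again isolates the determinant $\Gamma^{\sigma\zeta}\Gamma^{\tau\eta}-\Gamma^{\sigma\eta}\Gamma^{\tau\zeta}$ and forces it to $0$, contradicting the independence of $x_{\zeta},x_{\eta}$. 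For $d'\ge 3$, running the collapsing MGI with $\alpha$-row $\tau$ (resp.\ $\sigma$) and $P=\{s\}\cup{\rm supp}(\zeta+\eta)$ yields the two implications $\Gamma^{\tau\zeta}\neq 0\Rightarrow\Gamma^{\sigma\eta}=0$ and $\Gamma^{\sigma\zeta}\neq 0\Rightarrow\Gamma^{\tau\eta}=0$; combined with the non-vanishing of the determinant these are contradictory.

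The main obstacle is the MGI bookkeeping in the second step: one must choose the base point $\alpha$ and the position set $P$ so that, after using the interior-vanishing information, exactly the desired terms survive. The delicate point is the genuine split between $d=2$, where the midpoint choice leaves precisely the two minor terms plus a vanishing corner, and $d\ge 3$, where a distance-one base point makes all but a single term vanish and collapses an entire row (or vector) to zero. Verifying that the surviving indices are exactly as claimed, and that all the others land on interior rows already shown to be zero, is the routine but careful computation I would carry out in full.
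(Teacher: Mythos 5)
Your skeleton is the right one, and it matches the MGI-based route this lemma actually rests on (note the paper itself contains no proof of Lemma~\ref{le-col-2-1} --- it is quoted from \cite{Cai-Fu}): minimality of the Hamming distance forces all interior rows (resp.\ vectors $x_\beta$) on the geodesic to vanish, and then a collapsing MGI finishes. Your interior-vanishing step is correct, your $d\ge 3$ row argument checks out (the terms at row positions $j_m$, $m\ge 2$, carry the interior row $\tau+e_{j_1}+e_{j_m}$ and the column-position terms carry $\tau+e_{j_1}$, all zero, so the identity collapses to $\Gamma^{\tau\gamma^{*}}\Gamma^{\sigma\gamma'}=0$ and kills the row $M(\Gamma)^{\sigma}$), and your $d'\ge 3$ argument for $\zeta,\eta$ is likewise sound. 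The genuine gap is your $d=2$ branch for the rows --- which is the crux of the lemma, since rows at distance $2$ have equal weight parity and cannot be separated by the Parity Condition alone.

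Concretely: if ${\rm wt}(\sigma+\tau)=2$ then ${\rm wt}(\sigma)\equiv{\rm wt}(\tau)\pmod 2$, so by the parity condition both rows are supported on columns of a single weight parity; for any columns with ${\rm wt}(\gamma_1+\gamma_2)$ odd, both products $\Gamma^{\sigma\gamma_1}\Gamma^{\tau\gamma_2}$ and $\Gamma^{\sigma\gamma_2}\Gamma^{\tau\gamma_1}$ pair entries of opposite index parity and the minor is identically zero. So the columns you posit (nonzero minor together with ${\rm wt}(\gamma_1+\gamma_2)=1$) do not exist. Independently, your position vector $P=\{j_1,j_2,k\}$ has odd cardinality, and every MGI with $|P|$ odd is vacuous for a signature obeying the parity condition: in each term $\Gamma^{\alpha+e_{p_i}}\Gamma^{\alpha+P+e_{p_i}}$ the two superscripts have Hamming weights of opposite parity, so each term vanishes and the identity reads $0=0$; it cannot ``isolate the minor.'' (The same blind spot appears in your $d'=2$ branch, though there it is harmless: once ${\rm wt}(\sigma+\tau)=1$ is known, parity alone makes every distance-$2$ pair $x_{\zeta},x_{\eta}$ dependent, so that case is empty --- but not for the reason you give.) The repair is small and already inside your proposal: take any columns $\gamma_1,\gamma_2$ witnessing independence (their distance is then automatically even), keep the midpoint $\alpha=(\tau+e_{j_1})\gamma_1$, and use the even-size $P=\{j_1,j_2\}\cup{\rm supp}(\gamma_1+\gamma_2)$; every column-indexed term carries the vanishing interior row $\tau+e_{j_1}$, and the two row-indexed terms leave exactly $\Gamma^{\sigma\gamma_1}\Gamma^{\tau\gamma_2}-\Gamma^{\tau\gamma_1}\Gamma^{\sigma\gamma_2}=0$, the desired contradiction. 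Equivalently, just run your own $d\ge 3$ collapsing MGI at $d=2$: the term at $j_2$ now survives as well, and the resulting two-term identity $\Gamma^{\tau\gamma^{*}}\Gamma^{\sigma\gamma'}=\Gamma^{\sigma\gamma^{*}}\Gamma^{\tau\gamma'}$, valid for all $\gamma'$, makes $M(\Gamma)^{\sigma}$ proportional to $M(\Gamma)^{\tau}$ --- removing the case split altogether.
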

For a blockwise symmetric matchgate signature $\Gamma=(\Gamma^{\alpha_{1}\alpha_{2}\cdots \alpha_{n}})$ of arity $n\ell$,
by
Lemma \ref{le-col-2-1} and the parity condition,
if rank$(M(\Gamma))\geq 2$,
then $M(\Gamma)$ has a full rank submatrix of the form (zeros are due to Parity Constraint)
\begin{center}
$\left(\begin{smallmatrix}
\Gamma^{\alpha_{1}\alpha_{2}\cdots\alpha_{n}}&0\\
0&\Gamma^{(\alpha_{1}+e_s)(\alpha_{2}+e_t)\cdots\alpha_{n}}
\end{smallmatrix}\right)~~~~~~$
or
$~~~~~~\left(\begin{smallmatrix}
0&\Gamma^{\alpha_{1}\alpha_{2}\cdots\alpha_{n}}\\
\Gamma^{(\alpha_{1}+e_s)(\alpha_{2}+e_t)\cdots\alpha_{n}}&0
\end{smallmatrix}\right), $
\end{center}
where $e_s, e_t\in\{0, 1\}^{\ell}$.

Moreover, let $\Gamma=(\Gamma^{\alpha_{1}\alpha_{2}\cdots \alpha_{n}})$ be a blockwise symmetric matchgate signature of arity $n\ell$.
If rank$(M(\Gamma))\geq 3$, then there exist
$\sigma, \tau\in\{0, 1\}^{\ell}$ satisfy the following conditions:
\begin{itemize}
\item wt$(\sigma)$ and wt$(\tau)$ have the same parity,
\item $M(\Gamma)^{\sigma}$ and  $M(\Gamma)^{\tau}$ are linearly independent,
\item ${\rm wt}(\sigma+\tau)=\displaystyle\min_{ u, v\in\{0, 1\}^{\ell}}\{{\rm wt}(u+ v) \mid$
wt$(u)$ and wt$(v)$ have the same parity and  $M(\Gamma)^{u}$, $M(\Gamma)^{v}$ are linearly independent$\}$.
\end{itemize}
Moreover, For any
  $\beta=\alpha_1\cdots\alpha_{t-1}\alpha_{t+1}\cdots\alpha_n\in\{0,1\}^{(n-1)\ell}$, let $x_{\beta}=\begin{pmatrix}
\Gamma^{\alpha_1\cdots\alpha_{t-1}\sigma\alpha_{t+1}\cdots\alpha_n}\\
\Gamma^{\alpha_1\cdots\alpha_{t-1}\tau\alpha_{t+1}\cdots\alpha_n}
\end{pmatrix}$. Then there exist $\zeta, \eta$ satisfy the following condition:
\begin{itemize}
\item $x_{\zeta}$ and $x_{\eta}$ are linearly independent,
\item ${\rm wt}(\zeta+ \eta)=\displaystyle\min_{ u, v\in\{0, 1\}^{(n-1)\ell}}\{{\rm wt}(u+ v) \mid x_{u}$ and $x_{v}$
are linearly independent$\}$.
\end{itemize}
For such $\sigma, \tau, \zeta, \eta$, 
the following Lemma is given in \cite{Cai-Fu} (again using MGI),
\begin{lemma}\label{le-col-3-1}
If rank$(M(\Gamma))\geq 3$, then
${\rm wt}(\sigma+\tau)=2$,
${\rm wt}(\zeta+\eta)=2$.
  \end{lemma}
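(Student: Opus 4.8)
Assume without loss of generality that the odd entries of $\Gamma$ vanish (the even-parity case; the other case is symmetric). Then $\Gamma^{\alpha_1\cdots\alpha_n}=0$ unless ${\rm wt}(\alpha_1\cdots\alpha_n)$ is even, so each row $M(\Gamma)^{\alpha_1}$ is supported only on those columns $\alpha_2\cdots\alpha_n$ whose weight has the same parity as ${\rm wt}(\alpha_1)$. Hence the even-weight rows and the odd-weight rows of $M(\Gamma)$ have disjoint supports; after permuting columns $M(\Gamma)$ is block diagonal and ${\rm rank}(M(\Gamma))$ equals the sum of the ranks of its two parity blocks. As this sum is $\ge 3$, one block has rank $\ge 2$, which produces a linearly independent pair of same-parity rows, so $\sigma,\tau$ are well defined. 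Since ${\rm wt}(\sigma)\equiv{\rm wt}(\tau)\pmod 2$ and $\sigma\ne\tau$, the number ${\rm wt}(\sigma+\tau)$ is even and positive, giving ${\rm wt}(\sigma+\tau)\ge 2$. For $\zeta,\eta$, note that both entries of $x_\beta$ have total weight congruent to ${\rm wt}(\beta)+{\rm wt}(\sigma)\pmod 2$, so $x_\beta$ is the zero vector unless ${\rm wt}(\beta)$ has one fixed parity; in particular the two indices giving the independent $x_\zeta,x_\eta$ share this parity, so ${\rm wt}(\zeta+\eta)\ge 2$ as well.

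\textbf{Upper bound for $\sigma,\tau$: intermediate rows vanish.} It remains to prove ${\rm wt}(\sigma+\tau)\le 2$. Suppose instead $d:={\rm wt}(\sigma+\tau)\ge 4$, and let $S$ be the set of the $d$ positions in block $1$ on which $\sigma$ and $\tau$ differ. For any two $s,s'\in S$ put $\mu=\sigma+e_s+e_{s'}$; then ${\rm wt}(\mu)\equiv{\rm wt}(\sigma)$, ${\rm wt}(\mu+\sigma)=2<d$, and ${\rm wt}(\mu+\tau)=d-2<d$. By the minimality defining $d$, neither $(\mu,\sigma)$ nor $(\mu,\tau)$ is an independent same-parity pair, so $M(\Gamma)^\mu$ is parallel to both $M(\Gamma)^\sigma$ and $M(\Gamma)^\tau$ (or vanishes). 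Were $M(\Gamma)^\mu\ne\mathbf 0$, transitivity of parallelism would force $M(\Gamma)^\sigma\parallel M(\Gamma)^\tau$, contradicting independence. Hence $M(\Gamma)^\mu=\mathbf 0$: every strictly intermediate even row between $\sigma$ and $\tau$ vanishes.

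\textbf{The MGI step (the crux).} I will use MGI in the equivalent form: for $x,y\in\{0,1\}^{n\ell}$ differing exactly on $P=\{p_1<\cdots<p_k\}$, $\sum_{i=1}^{k}(-1)^i\Gamma^{x+e_{p_i}}\Gamma^{y+e_{p_i}}=0$, where by the Parity Condition the sum is nontrivial only for even $k$. As $M(\Gamma)^\sigma,M(\Gamma)^\tau\ne\mathbf 0$, fix columns $c,c'$ with $\Gamma^{\sigma c}\ne 0$ and $\Gamma^{\tau c'}\ne 0$, and let $T$ be the (even-size) set of column positions where $c,c'$ differ. Fix $s_1\in S$ and apply MGI to $x$ (block $1$ equal to $\sigma+e_{s_1}$, remaining blocks $c$) and $y$ (block $1$ equal to $\tau+e_{s_1}$, remaining blocks $c'$), so that $P=S\cup T$. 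The term at $s_1$ is $\pm\,\Gamma^{\sigma c}\Gamma^{\tau c'}\ne 0$; each term at another position of $S$ carries a factor $\Gamma^{\mu c}$ with $\mu$ a strictly intermediate even row, hence vanishes by the previous paragraph. Thus, modulo the terms indexed by $T$, MGI forces $\Gamma^{\sigma c}\Gamma^{\tau c'}=0$, a contradiction, and $d=2$ follows. The step I expect to be the main obstacle is precisely showing that the $T$-indexed terms contribute nothing: these lie on the opposite parity class (they use the odd rows $\sigma+e_{s_1}$ and $\tau+e_{s_1}$ with one flipped column) and are \emph{not} killed by the intermediate-row argument. When the opposite parity block has rank $\le 1$ these rows are proportional and the $T$-terms assemble, with matching signs coming from the ordering of $P$, into a single lower MGI and cancel; the remaining case must be reduced to this one (e.g.\ by descending on whichever parity block is being analyzed and invoking the same minimality on the opposite block). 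Verifying this cancellation is where MGI is used in an essential way.

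\textbf{The pair $\zeta,\eta$.} Once ${\rm wt}(\sigma+\tau)=2$ is known, the vectors $x_\beta$ are the columns of the $2\times 2^{(n-1)\ell}$ matrix whose two rows are obtained from $\Gamma$ by fixing block $t$ to $\sigma$ or to $\tau$; linear independence of $x_\zeta,x_\eta$ is a nonsingular $2\times 2$ minor of this matrix. By blockwise symmetry block $t$ is interchangeable with block $1$, and the argument above applies with the roles of rows and columns exchanged: minimality kills the strictly intermediate columns, and an MGI---now with $P$ the two relevant positions in block $t$ together with the positions where $\zeta,\eta$ differ---forces ${\rm wt}(\zeta+\eta)=2$, the same $T$-term cancellation being the delicate point. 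Combining the two halves yields ${\rm wt}(\sigma+\tau)={\rm wt}(\zeta+\eta)=2$.
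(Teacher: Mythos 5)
You should first be aware that the paper contains no proof of this lemma: it is imported verbatim from \cite{Cai-Fu} (``the following Lemma is given in \cite{Cai-Fu} (again using MGI)''), where it is established by a substantially longer MGI case analysis. Measured against that, your skeleton is the right one, and the easy half is correct and complete: the parity condition does make the even-weight and odd-weight rows of $M(\Gamma)$ have disjoint column supports, so ${\rm rank}(M(\Gamma))$ is the sum of the two parity-block ranks and the pairs $\sigma,\tau$ and $\zeta,\eta$ exist with ${\rm wt}(\sigma+\tau)\ge 2$ and ${\rm wt}(\zeta+\eta)\ge 2$; the minimality argument correctly forces $M(\Gamma)^{\mu}=\mathbf{0}$ for every strictly intermediate same-parity row $\mu=\sigma+e_s+e_{s'}$, $s,s'\in S$; and your MGI instance with pattern $(\sigma+e_{s_1})c$ and position set $P=S\cup T$ does isolate $\pm\Gamma^{\sigma c}\Gamma^{\tau c'}$ with all other $S$-indexed terms vanishing. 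Even your sign bookkeeping in the proportional subcase is right: since all of $S$ precedes all of $T$, the $T$-terms carry a uniform factor $(-1)^{|S|}$ and assemble into a scalar multiple of the MGI with pattern $(\sigma+e_{s_1})c$ and position vector $T$, which is $0$.

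The genuine gap is exactly where you stop, and it is the heart of the lemma. When $M(\Gamma)^{\sigma+e_{s_1}}$ and $M(\Gamma)^{\tau+e_{s_1}}$ are linearly independent, your proposed rescue --- ``invoking the same minimality on the opposite block'' --- does not apply: these two rows lie at distance ${\rm wt}\bigl((\sigma+e_{s_1})+(\tau+e_{s_1})\bigr)=d$ exactly, not $<d$, so their independence is fully consistent with the minimality defining $d$, and no descent is available from that quarter. In this case your MGI merely asserts that the $T$-sum equals $\mp\Gamma^{\sigma c}\Gamma^{\tau c'}$, which is no contradiction at all; closing it requires combining several MGI across choices of $s_1$, $c$, $c'$ (this is the real content of the argument in \cite{Cai-Fu}), and you supply only the hope that ``the remaining case must be reduced'' to the proportional one. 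The same admitted hole propagates into your final paragraph on $\zeta,\eta$, which is sketchier still and also elides the case distinction of whether the two differing positions lie in one block or in two --- the distinction behind the two submatrix shapes (\ref{rank3-1}) and (\ref{rank3-2}). So: correct frame, correct lower bounds, but the upper-bound crux --- the cross-parity $T$-term cancellation when the opposite parity block has rank at least $2$ --- is missing, at precisely the point you yourself flagged as unverified.
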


For a blockwise symmetric matchgate signature $\Gamma=(\Gamma^{\alpha_{1}\alpha_{2}\cdots \alpha_{n}})$ of arity $n\ell$,
 Lemma \ref{le-col-3-1} implies that if rank$(M(\Gamma))\geq 3$, then $M(\Gamma)$ has a full rank submatrix of the form
\begin{equation}\left(\begin{smallmatrix}\label{rank3-1}
\Gamma^{\alpha_{1}\alpha_{2}\alpha_3\cdots\alpha_{n}}&\Gamma^{\alpha_{1}(\alpha_{2}+e_{s}+e_{t})\alpha_{3}\cdots\alpha_{n}}\\
\Gamma^{(\alpha_{1}+e_{i}+e_{j})\alpha_2\alpha_3\cdots\alpha_{n}}&\Gamma^{(\alpha_{1}+e_{i}+e_{j})(\alpha_{2}+e_{s}+e_{t})\alpha_3\cdots\alpha_{n}}
\end{smallmatrix}\right)\end{equation}
or
\begin{equation}\label{rank3-2}\left(\begin{smallmatrix}
\Gamma^{\alpha_{1}\alpha_{2}\alpha_3\cdots\alpha_{n}}&\Gamma^{\alpha_{1}(\alpha_{2}+e_{s})(\alpha_{3}+e_{t})\cdots\alpha_{n}}\\
\Gamma^{(\alpha_{1}+e_{i}+e_{j})\alpha_2\alpha_3\cdots\alpha_{n}}&\Gamma^{(\alpha_{1}+e_{i}+e_{j})(\alpha_{2}+e_{s})(\alpha_{3}+e_{t})\cdots\alpha_{n}}
\end{smallmatrix}\right),\end{equation}
where
$e_i, e_j, e_s, e_t$ are in $\{0, 1\}^{\ell}$, $s< t$ in $(\ref{rank3-1})$  and $i< j$.
But the next two lemmas show that for
any $\{\beta_1, \beta_2, \cdots, \beta_n, \gamma_1, \gamma_2, \cdots, \gamma_n\},$ where $\beta_i, \gamma_i\in\{0, 1\}^{\ell}$,
such that wt$(\beta_1+\gamma_1)=2$, wt$(\beta_{2}\beta_{3}\cdots\beta_{n}+\gamma_{2}\gamma_{3}\cdots\gamma_{n})=2$,
the submatrix of $M(\Gamma)$
$\left(
\begin{smallmatrix}
\Gamma^{\beta_{1}\beta_{2}\beta_{3}\cdots\beta_{n}}&\Gamma^{\beta_{1}\gamma_{2}\gamma_{3}\cdots\gamma_{n}}\\
\Gamma^{\gamma_1\beta_{2}\beta_{3}\cdots\beta_{n}}&\Gamma^{\gamma_1\gamma_{2}\gamma_{3}\cdots\gamma_{n}}
\end{smallmatrix}
\right)
$
has rank less than 2. 

\begin{lemma}\label{xxx-1}
Let $\Gamma=(\Gamma^{\alpha_{1}\alpha_{2}\cdots\alpha_{n}})$ be a blockwise symmetric matchgate signature of arity $n\ell$, where $n\geq 3$.
Then for any $\alpha_{1}\alpha_{2}\cdots\alpha_{n}\in\{0, 1\}^{n\ell}$,
\begin{equation*}
A=\left(\begin{smallmatrix}
\Gamma^{\alpha_{1}\alpha_{2}\alpha_{3}\cdots\alpha_{n}}&\Gamma^{\alpha_{1}(\alpha_{2}+e_{s})(\alpha_{3}+e_{t})\cdots\alpha_{n}}\\
\Gamma^{(\alpha_{1}+e_{i}+e_{j})\alpha_{2}\alpha_{3}\cdots\alpha_{n}}&\Gamma^{(\alpha_{1}+e_{i}+e_{j})(\alpha_{2}+e_{s})(\alpha_{3}+e_{t})\cdots\alpha_{n}}
\end{smallmatrix}
\right)
\end{equation*}
is degenerate, where
$e_i, e_j, e_s, e_t\in\{0, 1\}^{\ell}$ and $i<j$.
\end{lemma}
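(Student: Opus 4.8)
The plan is to collapse the corner determinant $\det A$ to a single ``cross'' determinant via one Matchgate Identity (MGI), and then to play blockwise symmetry against the same MGI to force that cross determinant to equal its own negative.

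First I apply one MGI to $\Gamma$, viewed as a signature on $n\ell$ bits. Let $P$ be the weight-$4$ position vector made of the two flipped positions $i<j$ in block~$1$, the flipped position $\ell+s$ in block~$2$, and the flipped position $2\ell+t$ in block~$3$; these are four distinct positions because the blocks are disjoint. With base string $(\alpha_1+e_i)\alpha_2\cdots\alpha_n$, the two MGI terms obtained by flipping positions $i$ and $j$ reproduce $-\det A$, while the two terms obtained by flipping $\ell+s$ and $2\ell+t$ assemble into a cross determinant. Thus the MGI yields
\begin{equation*}
\begin{aligned}
\det A={}&\Gamma^{(\alpha_1+e_i)\alpha_2(\alpha_3+e_t)\cdots\alpha_n}\,\Gamma^{(\alpha_1+e_j)(\alpha_2+e_s)\alpha_3\cdots\alpha_n}\\
&-\Gamma^{(\alpha_1+e_i)(\alpha_2+e_s)\alpha_3\cdots\alpha_n}\,\Gamma^{(\alpha_1+e_j)\alpha_2(\alpha_3+e_t)\cdots\alpha_n}=:E.
\end{aligned}
\end{equation*}
The key feature of $E$ is that it is antisymmetric if one interchanges the two column-flip positions $\ell+s$ and $2\ell+t$ (i.e. swaps, in each factor, which of blocks $2,3$ carries the flip).

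The decisive second step uses blockwise symmetry, and this is exactly where $n\ge 3$ is needed: I must have two distinct blocks, $2$ and $3$, to carry the two column flips. Every entry of $A$ is invariant under swapping blocks $2$ and $3$, so $\det A$ is unchanged when I rewrite all four entries with those two blocks interchanged. The rewritten entries are again of corner type, now based at $\tilde\alpha=\alpha_1\alpha_3\alpha_2\alpha_4\cdots\alpha_n$ with the block-$2$ flip at position $t$ and the block-$3$ flip at position $s$; denoting this matrix $\tilde A$ we get $\det A=\det\tilde A$. Re-running the identical MGI reduction on $\tilde A$ expresses $\det\tilde A$ as its cross determinant $\tilde E$, and translating the four factors of $\tilde E$ back by the block-$(2,3)$ swap (again using $\Gamma=\Gamma\circ(\text{swap})$) identifies $\tilde E$ with $E$ after interchanging the two column-flip positions; by the antisymmetry noted above, $\tilde E=-E$.

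Chaining everything, $E=\det A=\det\tilde A=\tilde E=-E$, so $2\det A=0$ and hence $\det A=0$ over $\mathbb{C}$, which is the claimed degeneracy. I expect the main obstacle to be resisting the natural but doomed move of applying blockwise symmetry entrywise to $E$: in general $\alpha_2\ne\alpha_3$, so the individual entries of $E$ do not simplify under the swap. The fix is to apply the swap only to the swap-invariant number $\det A$ and to regenerate the cross determinant by re-running the MGI in the swapped frame, so that the crucial sign is produced automatically by the antisymmetry of the cross determinant. The rest is routine: verifying the MGI signs determined by the sorted order $i<j<\ell+s<2\ell+t$, and confirming that the four flip positions are distinct so that $P$ indeed has weight $4$.
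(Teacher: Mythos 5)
Your proposal is correct and is essentially the paper's own argument: the paper instantiates MGI with the same pattern $(\alpha_{1}+e_{i})\alpha_{2}\alpha_{3}\cdots\alpha_{n}$ and position vector $(e_{i}+e_{j})(e_{s})(e_{t})\mathbf{0}\cdots\mathbf{0}$, then re-runs the same MGI in the block-$(2,3)$-swapped frame with pattern $(\alpha_{1}+e_{i})\alpha_{3}\alpha_{2}\alpha_{4}\cdots\alpha_{n}$ and position vector $(e_{i}+e_{j})(e_{t})(e_{s})\mathbf{0}\cdots \mathbf{0}$, and uses blockwise symmetry to see that the two cross terms appear with opposite signs, so adding the two identities cancels them and leaves twice the corner determinant. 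Your formulation $E=\det A=\tilde E=-E$ is just the additive form of that same cancellation, including the correct observation that the swap must be exploited through a second MGI rather than entrywise on $E$.
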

\begin{proof}
We will prove that $\det(A)=0$ by applying Matchgate Identities (MGI).

 Let the pattern be $(\alpha_{1}+e_{i})\alpha_{2}\alpha_{3}\cdots\alpha_{n}$
and the position vector be $(e_{i}+e_{j})(e_{s})(e_{t})\mathbf{0}\cdots \mathbf{0}$. Then by MGI we have
\begin{equation}\label{MGI1}
\begin{split}
&\Gamma^{\alpha_{1}\alpha_{2}\alpha_{3}\cdots\alpha_{n}}\Gamma^{(\alpha_{1}+e_{i}+e_{j})(\alpha_{2}+e_{s})(\alpha_{3}+e_{t})\cdots\alpha_{n}}
-\Gamma^{(\alpha_{1}+e_{i}+e_{j})\alpha_{2}\alpha_{3}\cdots\alpha_{n}}\Gamma^{\alpha_{1}(\alpha_{2}+e_{s})(\alpha_{3}+e_{t})\cdots\alpha_{n}}\\
+&\Gamma^{(\alpha_{1}+e_{i})(\alpha_{2}+e_{s})\alpha_{3}\cdots\alpha_{n}}\Gamma^{(\alpha_{1}+e_{j})\alpha_{2}(\alpha_{3}+e_{t})
\cdots\alpha_{n}}
-\Gamma^{(\alpha_{1}+e_{i})\alpha_{2}(\alpha_{3}+e_{t})\cdots\alpha_{n}}\Gamma^{(\alpha_{1}+e_{j})(\alpha_{2}+e_{s})\alpha_{3}\cdots\alpha_{n}}=0.
\end{split}
\end{equation}
This instantiation of MGI has 4 terms, each component to flipping the bits at position $i, j, s, t$ respectively.

Moreover, let the pattern be $(\alpha_{1}+e_{i})\alpha_{3}\alpha_{2}\alpha_{4}\cdots\alpha_{n}$
and the position vector be $(e_{i}+e_{j})(e_{t})(e_{s})\mathbf{0}\cdots \mathbf{0}$. Then we have
\begin{equation}\label{MGI2}
\begin{split}
&\Gamma^{\alpha_{1}\alpha_{3}\alpha_{2}\cdots\alpha_{n}}\Gamma^{(\alpha_{1}+e_{i}+e_{j})(\alpha_{3}+e_{t})(\alpha_{2}+e_{s})\cdots\alpha_{n}}
-\Gamma^{(\alpha_{1}+e_{i}+e_{j})\alpha_{3}\alpha_{2}\cdots\alpha_{n}}\Gamma^{\alpha_{1}(\alpha_{3}+e_{t})(\alpha_{2}+e_{s})\cdots\alpha_{n}}\\
+&\Gamma^{(\alpha_{1}+e_{i})(\alpha_{3}+e_{t})\alpha_{2}\cdots\alpha_{n}}\Gamma^{(\alpha_{1}+e_{j})\alpha_{3}(\alpha_{2}+e_{s})
\cdots\alpha_{n}}
-\Gamma^{(\alpha_{1}+e_{i})\alpha_{3}(\alpha_{2}+e_{s})\cdots\alpha_{n}}\Gamma^{(\alpha_{1}+e_{j})(\alpha_{3}+e_{t})\alpha_{2}\cdots\alpha_{n}}=0.
\end{split}
\end{equation}
Since $\Gamma$ is blockwise symmetric, (\ref{MGI2}) can be rewritten as the following form:
\begin{equation}\label{MGI2'}
\begin{split}
&\Gamma^{\alpha_{1}\alpha_{2}\alpha_{3}\cdots\alpha_{n}}\Gamma^{(\alpha_{1}+e_{i}+e_{j})(\alpha_{2}+e_{s})(\alpha_{3}+e_{t})\cdots\alpha_{n}}
-\Gamma^{(\alpha_{1}+e_{i}+e_{j})\alpha_{2}\alpha_{3}\cdots\alpha_{n}}\Gamma^{\alpha_{1}(\alpha_{2}+e_{s})(\alpha_{3}+e_{t})\cdots\alpha_{n}}\\
+&\Gamma^{(\alpha_{1}+e_{i})\alpha_{2}(\alpha_{3}+e_{t})\cdots\alpha_{n}}\Gamma^{(\alpha_{1}+e_{j})(\alpha_{2}+e_{s})\alpha_{3}
\cdots\alpha_{n}}
-\Gamma^{(\alpha_{1}+e_{i})(\alpha_{2}+e_{s})\alpha_{3}\cdots\alpha_{n}}\Gamma^{(\alpha_{1}+e_{j})\alpha_{2}(\alpha_{3}+e_{t})\cdots\alpha_{n}}=0.
\end{split}
\end{equation}
Note that the $1, 2$-th terms of (\ref{MGI1}) are equal to the $1, 2$-th terms of (\ref{MGI2'}) respectively.
But the $3$-th term of (\ref{MGI1}) is equal to the $4$-th term of (\ref{MGI2'}) and
 the $4$-th term of (\ref{MGI1}) is equal to the $3$-th term of (\ref{MGI2'}).
Thus by adding (\ref{MGI1}) to (\ref{MGI2'}), the $3, 4$-th terms cancel and we have
\begin{equation}\label{deter1}
\Gamma^{\alpha_{1}\alpha_{2}\alpha_{3}\cdots\alpha_{n}}\Gamma^{(\alpha_{1}+e_{i}+e_{j})(\alpha_{2}+e_{s})(\alpha_{3}+e_{t})\cdots\alpha_{n}}
-\Gamma^{(\alpha_{1}+e_{i}+e_{j})\alpha_{2}\alpha_{3}\cdots\alpha_{n}}\Gamma^{\alpha_{1}(\alpha_{2}+e_{s})(\alpha_{3}+e_{t})\cdots\alpha_{n}}=0.
\end{equation}
(\ref{deter1}) implies that the determinant of the matrix $\left(\begin{smallmatrix}
\Gamma^{\alpha_{1}\alpha_{2}\alpha_{3}\cdots\alpha_{n}}&\Gamma^{\alpha_{1}(\alpha_{2}+e_{s})(\alpha_{3}+e_{t})\cdots\alpha_{n}}\\
\Gamma^{(\alpha_{1}+e_{i}+e_{j})\alpha_{2}\alpha_{3}\cdots\alpha_{n}}&\Gamma^{(\alpha_{1}+e_{i}+e_{j})(\alpha_{2}+e_{s})(\alpha_{3}+e_{t})\cdots\alpha_{n}}
\end{smallmatrix}\right)$ is zero.
This finishes the proof.
\end{proof}

By (\ref{deter1}), we have the following corollary.
\begin{coro}\label{linear-dependent-coro}
For a blockwise symmetric matchgate signature
$\Gamma=(\Gamma^{\alpha_1\alpha_2\cdots\alpha_n})$ and
 any $\alpha_1\alpha_2\cdots\alpha_n\in\{0, 1\}^{n\ell}$ and $1\leq i < j\leq\ell, 1\leq s, t\leq \ell$,
\begin{equation*}
\left(\begin{smallmatrix}
\Gamma^{\alpha_{1}\alpha_{2}\alpha_{3}\cdots\alpha_{n}}\\
\Gamma^{(\alpha_{1}+e_{i}+e_{j})\alpha_{2}\alpha_{3}\cdots\alpha_{n}}
\end{smallmatrix}\right)~~
{\rm and}~~
\left(\begin{smallmatrix}
\Gamma^{\alpha_{1}(\alpha_{2}+e_{s})(\alpha_{3}+e_{t})\cdots\alpha_{n}}\\
\Gamma^{(\alpha_{1}+e_{i}+e_{j})(\alpha_{2}+e_{s})(\alpha_{3}+e_{t})\cdots\alpha_{n}}
\end{smallmatrix}\right)
\end{equation*}
are linearly dependent.
\end{coro}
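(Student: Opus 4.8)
The plan is to reduce the claimed linear dependence to a single scalar identity and then invoke equation~(\ref{deter1}) directly. Recall that two column vectors $(a,c)^{T}$ and $(b,d)^{T}$ in $\mathbb{C}^2$ are linearly dependent precisely when the $2\times 2$ determinant $ad-bc$ vanishes. So I would first form the matrix whose two columns are the vectors in the statement, namely
\begin{equation*}
A=\left(\begin{smallmatrix}
\Gamma^{\alpha_{1}\alpha_{2}\alpha_{3}\cdots\alpha_{n}}&\Gamma^{\alpha_{1}(\alpha_{2}+e_{s})(\alpha_{3}+e_{t})\cdots\alpha_{n}}\\
\Gamma^{(\alpha_{1}+e_{i}+e_{j})\alpha_{2}\alpha_{3}\cdots\alpha_{n}}&\Gamma^{(\alpha_{1}+e_{i}+e_{j})(\alpha_{2}+e_{s})(\alpha_{3}+e_{t})\cdots\alpha_{n}}
\end{smallmatrix}\right),
\end{equation*}
and observe that this is exactly the matrix $A$ appearing in Lemma~\ref{xxx-1}.

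Next I would compute $\det A$ by expanding along the first row: it equals
\[
\Gamma^{\alpha_{1}\alpha_{2}\alpha_{3}\cdots\alpha_{n}}\Gamma^{(\alpha_{1}+e_{i}+e_{j})(\alpha_{2}+e_{s})(\alpha_{3}+e_{t})\cdots\alpha_{n}}
-\Gamma^{(\alpha_{1}+e_{i}+e_{j})\alpha_{2}\alpha_{3}\cdots\alpha_{n}}\Gamma^{\alpha_{1}(\alpha_{2}+e_{s})(\alpha_{3}+e_{t})\cdots\alpha_{n}},
\]
which is precisely the left-hand side of equation~(\ref{deter1}). Lemma~\ref{xxx-1} has already established that this quantity is zero for every choice of $\alpha_1\cdots\alpha_n$ and every $e_i,e_j,e_s,e_t\in\{0,1\}^{\ell}$ subject only to $i<j$. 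Hence $\det A=0$, and the two column vectors are linearly dependent, as claimed.

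The only point requiring a moment's care is matching the index ranges. The corollary quantifies over $1\le i<j\le\ell$ and $1\le s,t\le\ell$ with no ordering or distinctness constraint on $s$ and $t$, whereas Lemma~\ref{xxx-1} is stated with the single hypothesis $i<j$ together with arbitrary $e_s,e_t$. Since the lemma imposes no restriction whatsoever on $s$ and $t$ (in particular it permits $s=t$, $s>t$, and so on), every instance demanded by the corollary is already covered, so no additional argument is needed. I do not anticipate any genuine obstacle: the corollary is simply a reformulation of~(\ref{deter1}) as a statement about $2\times 2$ rank, and all the analytic content — the two MGI instantiations and the cancellation of their third and fourth terms — was already carried out in the proof of Lemma~\ref{xxx-1}.
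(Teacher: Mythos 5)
Your proposal is correct and follows exactly the paper's route: the paper derives Corollary~\ref{linear-dependent-coro} immediately from equation~(\ref{deter1}) of Lemma~\ref{xxx-1}, observing just as you do that the vanishing of that determinant is equivalent to the linear dependence of the two columns. Your additional check that the index ranges of the corollary (arbitrary $s,t$, including $s=t$) are already covered by the hypotheses of Lemma~\ref{xxx-1} is a sound, if brief, point of care that the paper leaves implicit.
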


\begin{lemma}\label{xxx-2}
Let $\Gamma=(\Gamma^{\alpha_{1}\alpha_{2}\cdots\alpha_{n}})$ be a blockwise symmetric matchgate signature of arity $n\ell$, where $n\geq 3$.
Then for any $\alpha_{1}\alpha_{2}\cdots\alpha_{n}\in\{0, 1\}^{n\ell}$,
\begin{equation*}
B=\left(\begin{smallmatrix}
\Gamma^{\alpha_{1}\alpha_{2}\alpha_{3}\cdots\alpha_{n}}&\Gamma^{\alpha_{1}(\alpha_{2}+e_{s}+e_{t})\alpha_{3}\cdots\alpha_{n}}\\
\Gamma^{(\alpha_{1}+e_{i}+e_{j})\alpha_{2}\alpha_{3}\cdots\alpha_{n}}&
\Gamma^{(\alpha_{1}+e_{i}+e_{j})(\alpha_{2}+e_{s}+e_{t})\alpha_{3}\cdots\alpha_{n}}
\end{smallmatrix}\right),
\end{equation*}
is degenerate, where $e_i, e_j, e_s, e_t\in\{0, 1\}^{\ell}$ and $i<j$ and $ s<t$.
\end{lemma}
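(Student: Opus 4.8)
The plan is to recast the statement as a linear dependence between the two columns of $B$, viewed as vectors in $\mathbb{C}^2$, which is equivalent to $\det(B)=0$. Write $u_0=\left(\begin{smallmatrix}\Gamma^{\alpha_1\alpha_2\alpha_3\cdots\alpha_n}\\ \Gamma^{(\alpha_1+e_i+e_j)\alpha_2\alpha_3\cdots\alpha_n}\end{smallmatrix}\right)$ and $u_1=\left(\begin{smallmatrix}\Gamma^{\alpha_1(\alpha_2+e_s+e_t)\alpha_3\cdots\alpha_n}\\ \Gamma^{(\alpha_1+e_i+e_j)(\alpha_2+e_s+e_t)\alpha_3\cdots\alpha_n}\end{smallmatrix}\right)$ for the first and second columns of $B$; the goal is then $u_0\parallel u_1$.

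First I would note that the two-MGI cancellation trick used for Lemma~\ref{xxx-1} does \emph{not} transfer to $B$: here both column flips $e_s,e_t$ lie in the same block (block~$2$), and an MGI supported only on blocks $1,2$ merely yields $\det(B)=\Gamma^{(\alpha_1+e_i)(\alpha_2+e_t)\cdots\alpha_n}\Gamma^{(\alpha_1+e_j)(\alpha_2+e_s)\cdots\alpha_n}-\Gamma^{(\alpha_1+e_i)(\alpha_2+e_s)\cdots\alpha_n}\Gamma^{(\alpha_1+e_j)(\alpha_2+e_t)\cdots\alpha_n}$, whose right-hand side is again a determinant of exactly the same shape at a shifted base point. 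The identity is therefore circular within blocks $1,2$, and one genuinely needs a third block. This is available precisely because $n\ge 3$, and is the structural reason the lemma can fail at $n=2$.

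The key step is to transport $u_0$ to $u_1$ through block~$3$ by applying Corollary~\ref{linear-dependent-coro} twice, with the weight-$2$ row shift $e_i+e_j$ always in block~$1$. Applying it with column shifts $e_s$ in block~$2$ and $e_t$ in block~$3$ gives $u_0\parallel w$, where $w=\left(\begin{smallmatrix}\Gamma^{\alpha_1(\alpha_2+e_s)(\alpha_3+e_t)\cdots\alpha_n}\\ \Gamma^{(\alpha_1+e_i+e_j)(\alpha_2+e_s)(\alpha_3+e_t)\cdots\alpha_n}\end{smallmatrix}\right)$. Applying it again at the base configuration $\alpha_1(\alpha_2+e_s)(\alpha_3+e_t)\alpha_4\cdots\alpha_n$ with column shifts $e_t$ in block~$2$ and $e_t$ in block~$3$ gives $w\parallel u_1$, since $\alpha_3+e_t+e_t=\alpha_3$. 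When $w\neq 0$, transitivity of parallelism yields $u_0\parallel u_1$, i.e.\ $\det(B)=0$.

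The hard part is the degenerate case $w=0$, where transitivity breaks down. I would first exploit that there is a whole family of common intermediates: for every $b$, the vector $m_b=\left(\begin{smallmatrix}\Gamma^{\alpha_1(\alpha_2+e_s)(\alpha_3+e_b)\cdots\alpha_n}\\ \Gamma^{(\alpha_1+e_i+e_j)(\alpha_2+e_s)(\alpha_3+e_b)\cdots\alpha_n}\end{smallmatrix}\right)$ is parallel to $u_0$ (shift $e_s,e_b$) and to $u_1$ (shift $e_t,e_b$), and likewise for the $e_s\leftrightarrow e_t$ analogue $m_b'$. Hence the obstruction forces the entire collection $\{m_b,m_b'\}_{b\in[\ell]}$ to vanish, a very large set of zero entries. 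In that remaining case I would return to the Matchgate Identities: with pattern $(\alpha_1+e_i)\alpha_2\alpha_3\cdots\alpha_n$ and position vector $(e_i+e_j)(e_s)(e_t)\mathbf{0}\cdots\mathbf{0}$, the two terms obtained by flipping the block-$1$ positions each carry a factor equal to an entry of $w$ and so drop out, collapsing the identity to a relation among the surviving entries; I expect that combining several such collapsed identities with the parity condition forces $\det(B)=0$ even here. This vanishing-case analysis, possible only because the spare block~$3$ exists when $n\ge 3$, is exactly where the non-vanishing hypothesis of \cite{cailu10} gets removed, and I anticipate it to be the most delicate part of the argument.
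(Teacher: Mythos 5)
Your first two steps are correct, and they are in fact the same transitivity-through-a-common-intermediate trick the paper uses: Corollary~\ref{linear-dependent-coro} gives $u_0\parallel m_b$ and $m_b\parallel u_1$ (and likewise for $m_b'$), so $\det(B)=0$ whenever some intermediate is nonzero. But the proof is not complete: in the residual case where \emph{all} $m_b,m_b'$ vanish, your argument is explicitly conjectural (``I expect that combining several such collapsed identities \ldots forces $\det(B)=0$''), and the specific MGI you propose there cannot close the gap. With pattern $(\alpha_1+e_i)\alpha_2\alpha_3\cdots\alpha_n$ and position vector $(e_i+e_j)(e_s)(e_t)\mathbf{0}\cdots\mathbf{0}$, once the two block-$1$ terms drop out (their second factors are the entries of $w$), what survives is
\begin{equation*}
\Gamma^{(\alpha_1+e_i)(\alpha_2+e_s)\alpha_3\cdots\alpha_n}\Gamma^{(\alpha_1+e_j)\alpha_2(\alpha_3+e_t)\cdots\alpha_n}
-\Gamma^{(\alpha_1+e_i)\alpha_2(\alpha_3+e_t)\cdots\alpha_n}\Gamma^{(\alpha_1+e_j)(\alpha_2+e_s)\alpha_3\cdots\alpha_n}=0,
\end{equation*}
a relation among entries whose block-$1$ index is $\alpha_1+e_i$ or $\alpha_1+e_j$. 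These are exactly the odd-side entries that your own preliminary remark showed are circularly tied to $\det(B)$: any MGI whose block-$1$ flips produce the four entries of $B$ necessarily has its remaining terms supported on block-$1$ indices $\alpha_1+e_i,\alpha_1+e_j$, whereas your vanishing set $\{m_b,m_b'\}$ consists of even-side entries ($\alpha_1$ or $\alpha_1+e_i+e_j$ in block $1$). So your hypothesis in the hard case kills the wrong terms, and no combination of the collapsed identities you describe visibly yields $\det(B)=0$.

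The paper resolves precisely this mismatch by running your dichotomy in the \emph{transposed} orientation. Using blockwise symmetry it treats the weight-$2$ shift $e_s+e_t$ (inside block $2$) as the row shift and splits the column shifts as one bit in block $1$ and one bit in block $3$; the common intermediates are then the vectors $\bigl(\Gamma^{\alpha_2(\alpha_1+e_j)(\alpha_3+e_u)\cdots\alpha_n},\ \Gamma^{(\alpha_2+e_s+e_t)(\alpha_1+e_j)(\alpha_3+e_u)\cdots\alpha_n}\bigr)^{T}$ and their $e_i$-analogues. If $\det(B)\neq 0$ these must all vanish, which gives $\Gamma^{(\alpha_1+e_j)(\alpha_3+e_u)(\alpha_2+e_s+e_t)\cdots\alpha_n}=\Gamma^{(\alpha_1+e_i)\alpha_2(\alpha_3+e_v)\cdots\alpha_n}=0$ for all $u,v$ --- odd-side entries in block $1$. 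These are exactly the factors occurring in every extra term of the long MGI with pattern $(\alpha_1+e_i)\alpha_2\alpha_3\cdots\alpha_n$ and position vector $(e_i+e_j)(\alpha_2+\alpha_3)(\alpha_2+\alpha_3+e_s+e_t)\mathbf{0}\cdots\mathbf{0}$, so that identity collapses to its two block-$1$ terms, which equal $\det(B)$ up to blockwise symmetry, yielding the contradiction. To repair your write-up you would need to replace your degenerate-case sketch with this transposed version (or something equivalent); as it stands, the case on which the whole improvement over \cite{cailu10} rests is asserted, not proved.
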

\begin{proof}
We will prove that $\det(B)=0$ by applying MGI.

Assume that $\det(B)\neq 0$, firstly, we claim that
\[\Gamma^{(\alpha_{1}+e_{j})(\alpha_{3}+e_{u})(\alpha_{2}+e_{s}+e_{t})\cdots\alpha_{n}}=
 \Gamma^{(\alpha_{1}+e_{i})\alpha_{2}(\alpha_{3}+e_{v})\cdots\alpha_{n}}=0\] for any $1\leq u, v\leq\ell$.
If there is $u$ such that $\Gamma^{(\alpha_{1}+e_{j})(\alpha_{3}+e_{u})(\alpha_{2}+e_{s}+e_{t})\cdots\alpha_{n}}\neq 0$,
then $\Gamma^{(\alpha_{2}+e_{s}+e_{t})(\alpha_{1}+e_{j})(\alpha_{3}+e_{u})\cdots\alpha_{n}}\neq 0$ since $\Gamma$
us blockwise symmetric. Thus
\begin{equation}\label{nonzero-vector}
\left(\begin{smallmatrix}
\Gamma^{\alpha_{2}(\alpha_{1}+e_{j})(\alpha_{3}+e_{u})\cdots\alpha_{n}}\\
\Gamma^{(\alpha_{2}+e_{s}+e_{t})(\alpha_{1}+e_{j})(\alpha_{3}+e_{u})\cdots\alpha_{n}}
\end{smallmatrix}\right)
\end{equation}
is not a zero vector.
Since
\begin{center}
$\left(\begin{smallmatrix}
\Gamma^{\alpha_{2}\alpha_{1}\alpha_{3}\cdots\alpha_{n}}\\
\Gamma^{(\alpha_{2}+e_{s}+e_{t})\alpha_{1}\alpha_{3}\cdots\alpha_{n}}
\end{smallmatrix}\right)$ and
$\left(\begin{smallmatrix}
\Gamma^{\alpha_{2}(\alpha_{1}+e_{j})(\alpha_{3}+e_{u})\cdots\alpha_{n}}\\
\Gamma^{(\alpha_{2}+e_{s}+e_{t})(\alpha_{1}+e_{j})(\alpha_{3}+e_{u})\cdots\alpha_{n}}
\end{smallmatrix}\right)$,
\end{center}
\begin{center}
$\left(\begin{smallmatrix}
\Gamma^{\alpha_{2}(\alpha_{1}+e_{i}+e_{j})\alpha_{3}\cdots\alpha_{n}}\\
\Gamma^{(\alpha_{2}+e_{s}+e_{t})(\alpha_{1}+e_{i}+e_{j})\alpha_{3}\cdots\alpha_{n}}
\end{smallmatrix}\right)$ and
$\left(\begin{smallmatrix}
\Gamma^{\alpha_{2}(\alpha_{1}+e_{j})(\alpha_{3}+e_{u})\cdots\alpha_{n}}\\
\Gamma^{(\alpha_{2}+e_{s}+e_{t})(\alpha_{1}+e_{j})(\alpha_{3}+e_{u})\cdots\alpha_{n}}
\end{smallmatrix}\right)$
\end{center}
are linearly dependent respectively by Corollary~\ref{linear-dependent-coro}
and the nonzero vector in (\ref{nonzero-vector}) appears as the second vector in both cases. Thus
\begin{center}
$\left(\begin{smallmatrix}
\Gamma^{\alpha_{2}\alpha_{1}\alpha_{3}\cdots\alpha_{n}}\\
\Gamma^{(\alpha_{2}+e_{s}+e_{t})\alpha_{1}\alpha_{3}\cdots\alpha_{n}}
\end{smallmatrix}\right)$ and
$\left(\begin{smallmatrix}
\Gamma^{\alpha_{2}(\alpha_{1}+e_{i}+e_{j})\alpha_{3}\cdots\alpha_{n}}\\
\Gamma^{(\alpha_{2}+e_{s}+e_{t})(\alpha_{1}+e_{i}+e_{j})\alpha_{3}\cdots\alpha_{n}}
\end{smallmatrix}\right)$
\end{center}
are linearly dependent.
This contradicts that
$\det(B)\neq 0$.
Thus $\Gamma^{(\alpha_{1}+e_{j})(\alpha_{3}+e_{u})(\alpha_{2}+e_{s}+e_{t})\cdots\alpha_{n}}=0$ for any $1\leq u\leq\ell$.
By
replacing $\alpha_{2}+e_{s}+e_{t}$ with $\alpha_2$, replacing $\alpha_1+e_j$ with $\alpha_1+e_i$
and replacing $\alpha_3+e_u$ with $\alpha_3+e_v$ respectively,
we can prove that $\Gamma^{(\alpha_{1}+e_{i})(\alpha_{3}+e_{v})\alpha_{2}\cdots\alpha_{n}}=0$ for any $v\in[\ell]$
in the same way.
Thus $\Gamma^{(\alpha_{1}+e_{i})\alpha_{2}(\alpha_{3}+e_{v})\cdots\alpha_{n}}=0$ for any $1\leq v\leq \ell$,
i.e.,
 we have
\[\Gamma^{(\alpha_{1}+e_{j})(\alpha_{3}+e_{u})(\alpha_{2}+e_{s}+e_{t})\cdots\alpha_{n}}=
 \Gamma^{(\alpha_{1}+e_{i})\alpha_{2}(\alpha_{3}+e_{v})\cdots\alpha_{n}}=0\] for any $1\leq u, v\leq\ell$.

Now we apply MGI again.
Let the pattern be $(\alpha_{1}+e_{i})\alpha_{2}\alpha_{3}\alpha_{4}\cdots\alpha_{n}$
and the position vector be $(e_{i}+e_{j})(\alpha_{2}+\alpha_{3})(\alpha_{2}+\alpha_{3}+e_{s}+e_{t})\mathbf{0}\cdots \mathbf{0}$
and $S=\{k| $the $k$-th bit of $\alpha_2+\alpha_3$ is 1$\}$,
$T=\{k| $the $k$-th bit of $\alpha_2+\alpha_3+e_s+e_t$ is 1$\}$, then
by MGI we have
\begin{equation}\label{MGI3}
\begin{split}
&\Gamma^{\alpha_{1}\alpha_{2}\alpha_{3}\cdots\alpha_{n}}
\Gamma^{(\alpha_{1}+e_{i}+e_{j})\alpha_{3}(\alpha_{2}+e_{s}+e_{t})\cdots\alpha_{n}}
-\Gamma^{(\alpha_{1}+e_{i}+e_{j})\alpha_{2}\alpha_{3}\cdots\alpha_{n}}
\Gamma^{\alpha_{1}\alpha_{3}(\alpha_{2}+e_{s}+e_{t})\cdots\alpha_{n}}\\
&+\displaystyle\sum_{u\in S}(\pm \Gamma^{(\alpha_{1}+e_{i})(\alpha_{2}+e_{u})\alpha_{3}\cdots\alpha_{n}}
\Gamma^{(\alpha_{1}+e_{j})(\alpha_{3}+e_{u})(\alpha_{2}+e_{s}+e_{t})\cdots\alpha_{n}})\\
&+\displaystyle\sum_{v\in T}(\pm \Gamma^{(\alpha_{1}+e_{i})\alpha_{2}(\alpha_{3}+e_{v})\cdots\alpha_{n}}
\Gamma^{(\alpha_{1}+e_{j})\alpha_{3}(\alpha_{2}+e_{s}+e_{t}+e_{v})\cdots\alpha_{n}})=0.
\end{split}
\end{equation}
Since $\Gamma^{(\alpha_{1}+e_{j})(\alpha_{3}+e_{u})(\alpha_{2}+e_{s}+e_{t})\cdots\alpha_{n}}=
 \Gamma^{(\alpha_{1}+e_{i})\alpha_{2}(\alpha_{3}+e_{v})\cdots\alpha_{n}}=0$ for all $1\leq u, v\leq \ell$,
we have
\begin{equation}\label{mgi-detB=0}
\Gamma^{\alpha_{1}\alpha_{2}\alpha_{3}\cdots\alpha_{n}}
\Gamma^{(\alpha_{1}+e_{i}+e_{j})\alpha_{3}(\alpha_{2}+e_{s}+e_{t})\cdots\alpha_{n}}
-\Gamma^{(\alpha_{1}+e_{i}+e_{j})\alpha_{2}\alpha_{3}\cdots\alpha_{n}}
\Gamma^{\alpha_{1}\alpha_{3}(\alpha_{2}+e_{s}+e_{t})\cdots\alpha_{n}}=0
\end{equation}
 by (\ref{MGI3}).
Since $\Gamma$ is blockwise symmetric, we have
\begin{equation*}
\Gamma^{\alpha_{1}\alpha_{2}\alpha_{3}\cdots\alpha_{n}}
\Gamma^{(\alpha_{1}+e_{i}+e_{j})\alpha_{3}(\alpha_{2}+e_{s}+e_{t})\cdots\alpha_{n}}
-\Gamma^{(\alpha_{1}+e_{i}+e_{j})\alpha_{2}\alpha_{3}\cdots\alpha_{n}}
\Gamma^{\alpha_{1}\alpha_{3}(\alpha_{2}+e_{s}+e_{t})\cdots\alpha_{n}}=0
\end{equation*}
 by (\ref{mgi-detB=0}).
Thus we have $\det(B)=0$ and finishes
the proof.
\end{proof}

\begin{theorem}\label{mainlemma}
If $\Gamma=(\Gamma^{\alpha_{1}\alpha_{2}\cdots\alpha_{n}})$ is a blockwise symmetric matchgate signature of arity $n\ell$, where $n\geq 3$, then
rank$(M(\Gamma))\leq 2$.
\end{theorem}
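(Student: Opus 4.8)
The plan is to argue by contradiction: I would assume ${\rm rank}(M(\Gamma)) \geq 3$ and derive a contradiction, which forces ${\rm rank}(M(\Gamma)) \leq 2$. So suppose ${\rm rank}(M(\Gamma)) \geq 3$. The first step is to invoke Lemma~\ref{le-col-3-1} through the selection procedure described just before it: choose $\sigma, \tau \in \{0,1\}^\ell$ of the same parity with $M(\Gamma)^\sigma$ and $M(\Gamma)^\tau$ linearly independent and ${\rm wt}(\sigma+\tau)$ minimal among all such pairs, and then choose $\zeta, \eta \in \{0,1\}^{(n-1)\ell}$ with $x_\zeta$ and $x_\eta$ linearly independent and ${\rm wt}(\zeta+\eta)$ minimal. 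Lemma~\ref{le-col-3-1} then guarantees ${\rm wt}(\sigma+\tau) = 2$ and ${\rm wt}(\zeta+\eta) = 2$.

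The next step is to extract from these minimal pairs a nonsingular $2\times 2$ submatrix of $M(\Gamma)$ in one of the two canonical shapes. Since ${\rm wt}(\sigma+\tau) = 2$, writing $\sigma = \alpha_1$ I may set $\tau = \alpha_1 + e_i + e_j$ with $i < j$. Likewise ${\rm wt}(\zeta+\eta) = 2$ means $\zeta$ and $\eta$ differ in exactly two bit positions of the $(n-1)\ell$-bit string, and these two positions lie either inside a single size-$\ell$ block or across two distinct blocks. Using blockwise symmetry to relabel the blocks, I reduce to the two cases where the affected blocks are $\alpha_2$ alone (so $\eta = \zeta + e_s + e_t$ within $\alpha_2$) or $\alpha_2$ together with $\alpha_3$ (so $\eta$ flips $e_s$ in $\alpha_2$ and $e_t$ in $\alpha_3$); here the hypothesis $n \geq 3$ is exactly what makes the second, two-block configuration available, since $\beta = \alpha_2\cdots\alpha_n$ then has at least two blocks. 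By the linear independence of $x_\zeta$ and $x_\eta$, the $2\times 2$ submatrix of $M(\Gamma)$ with rows indexed by $\sigma, \tau$ and columns indexed by $\zeta, \eta$ is nonsingular, and it is precisely of the form~(\ref{rank3-1}) in the one-block case and of the form~(\ref{rank3-2}) in the two-block case.

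Finally I would derive the contradiction. The form~(\ref{rank3-1}) is exactly the matrix $B$ of Lemma~\ref{xxx-2}, and the form~(\ref{rank3-2}) is exactly the matrix $A$ of Lemma~\ref{xxx-1}; both lemmas assert that every such matrix is degenerate. This contradicts the nonsingularity just obtained, so ${\rm rank}(M(\Gamma)) \geq 3$ is impossible and ${\rm rank}(M(\Gamma)) \leq 2$. The genuinely hard work has already been done inside Lemmas~\ref{xxx-1} and~\ref{xxx-2}, where the vanishing of the two determinants is squeezed out of the Matchgate Identities together with Corollary~\ref{linear-dependent-coro} and the parity and blockwise-symmetry bookkeeping. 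Granting those, the only delicate point left in the present proof is the normalization step of the second paragraph: checking that the minimal-weight pairs delivered by Lemma~\ref{le-col-3-1}, after applying blockwise symmetry, really do land in one of the two shapes $A$ or $B$ rather than some other weight-$2$ configuration. This is settled by the observation that a weight-$2$ difference in $\{0,1\}^{(n-1)\ell}$ is either intra-block or inter-block, so the two shapes are exhaustive.
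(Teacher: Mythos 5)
Your proposal is correct and follows essentially the same route as the paper: assume ${\rm rank}(M(\Gamma))\geq 3$, use Lemma~\ref{le-col-3-1} to extract a full-rank $2\times 2$ submatrix of the form~(\ref{rank3-1}) or~(\ref{rank3-2}), and contradict this with the degeneracy established in Lemmas~\ref{xxx-2} and~\ref{xxx-1} respectively. The only difference is that you spell out the normalization step (passing from the minimal weight-$2$ pairs $\sigma,\tau$ and $\zeta,\eta$ to the two canonical intra-block/inter-block shapes via blockwise symmetry), which the paper treats as already settled in the discussion preceding Lemma~\ref{le-col-3-1}, and your identification of the shapes with the matrices $A$ and $B$ is accurate.
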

\begin{proof}
If rank$(M(\Gamma))\geq 3$, by Lemma \ref{le-col-3-1},
then there is a full rank submatrix of $M(\Gamma)$ of the following form
\begin{equation*}
A=\left(\begin{smallmatrix}
\Gamma^{\alpha_{1}\alpha_{2}\alpha_{3}\cdots\alpha_{n}}&\Gamma^{\alpha_{1}(\alpha_{2}+e_{s})(\alpha_{3}+e_{t})\cdots\alpha_{n}}\\
\Gamma^{(\alpha_{1}+e_{i}+e_{j})\alpha_{2}\alpha_{3}\cdots\alpha_{n}}&\Gamma^{(\alpha_{1}+e_{i}+e_{j})(\alpha_{2}+e_{s})(\alpha_{3}+e_{t})\cdots\alpha_{n}}
\end{smallmatrix}\right)
\end{equation*}
or
\begin{equation*}
B=\left(\begin{smallmatrix}
\Gamma^{\alpha_{1}\alpha_{2}\alpha_{3}\cdots\alpha_{n}}&\Gamma^{\alpha_{1}(\alpha_{2}+e_{s}+e_{t})\alpha_{3}\cdots\alpha_{n}}\\
\Gamma^{(\alpha_{1}+e_{i}+e_{j})\alpha_{2}\alpha_{3}\cdots\alpha_{n}}&
\Gamma^{(\alpha_{1}+e_{i}+e_{j})(\alpha_{2}+e_{s}+e_{t})\alpha_{3}\cdots\alpha_{n}}
\end{smallmatrix}\right),
\end{equation*}
where
$e_i, e_j, e_s, e_t\in\{0, 1\}^{\ell}$ and $i<j$ and $s<t$.
But by Lemma~\ref{xxx-1} and Lemma~\ref{xxx-2}, we know
both  $A$ and $B$ are degenerate.
This is a contradiction.
\end{proof}

Now we can prove Theorem~\ref{equality-no-matchgate}.
\begin{proof}
Let $M(=_n)$ be the matrix form of the {\sc Equality} signature $(=_n)$. Then $M(=_n)_{i, i\cdots i}=1$ for $0\leq i\leq q-1$ and all other entries of
$M(=_n)$ are zero. Thus rank$(M(=_n))=q$. For any $q\times 2^{\ell}$ matrix with rank $q$,
let $\Gamma=(=_n) M^{\otimes n}$, then $\Gamma$ is blockwise symmetric by Lemma~\ref{symmetry} and
$M(\Gamma)=M^{T}(M(=_n) M^{\otimes n-1}$ by Lemma~\ref{signature-matrix}.
Since rank$(M(=_n))=q\geq 3$, rank$(M^T)=q$ and rank$(M^{\otimes n-1})=q^{(n-1)}$. We have rank$(M(\Gamma))\geq 3$ by Lemma~\ref{rank}.
If $\Gamma$ is realized by a matchgate, i.e., $\Gamma$ is a matchgate signature, then
by  Theorem~\ref{mainlemma}, rank$(M(\Gamma))\leq 2$.
This is a contradiction.
\end{proof}
We remark that
the condition $n\geq 3$ is necessary in Theorem~\ref{mainlemma}.
For example, let
$\Gamma=(\Gamma^{\alpha_{1}\alpha_{2}})$, where $\alpha_{1}, \alpha_{2}\in\{0, 1\}^{2}$, with
$\Gamma^{0000}=\Gamma^{1001}=\Gamma^{0110}=1, \Gamma^{1111}=-1$ and all other entries
are zero. Note that $\Gamma$ is blockwise symmetric and is realized by the matchgate
$\Gamma_1$.
But rank$(M(\Gamma))=4$.

\setlength{\unitlength}{2.3mm}
\begin{picture}(15,15)(-30,-6)
\put(0, 0){\line(1,0){8}}
\put(-0.3, -0.3){$\bullet$}
\put(7.7, -0.3){$\bullet$}
\put(0, 8){\line(1,0){8}}
\put(-0.3, 7.7){$\bullet$}
\put(7.7, 7.7){$\bullet$}
\put(0, 0){\line(0, 1){8}}
\put(8, 0){\line(0, 1){8}}
\put(0, 4){\line(1,0){8}}
\put(-0.3, 3.7){$\bullet$}
\put(7.7, 3.7){$\bullet$}
\put(3, 4.5){\rm \scriptsize -1}
\put(-25, -2){\rm \scriptsize $\Gamma_1$:
The four nodes at the corner are external nodes and the other two nodes are internal nodes.}
\put(-22, -4){\rm \scriptsize The middle edge has weight -1 and all other edges have weight 1.}
\end{picture}

Theorem~\ref{equality-no-matchgate} implies that the {\sc Equalities} cannot be realized by matchgates
under holographic transformation. Thus we have the following theorem.
\begin{theorem}
For any $q\geq 3$, there is no polynomial time algorithms for $\operatorname{Pl-\#CSP}(\mathcal{F})$
$(\operatorname{Pl-Holant}(\mathcal{EQ}|\mathcal{F}))$,
for any $\mathcal{F}$ such that $\operatorname{\#CSP}(\mathcal{F})$ is $\operatorname{\#P}$-hard, that is obtained by a holographic
transformation $M\in\mathbb{C}^{q\times 2^{\ell}}$ of rank $q$,
such that all {\sc Equalities} in $\mathcal{EQ}$ are transformed to $\mathscr{M}$.
\end{theorem}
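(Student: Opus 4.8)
The plan is to obtain this statement as a direct corollary of Theorem~\ref{equality-no-matchgate}; essentially all the substantive work has already been done there (through Lemmas~\ref{xxx-1} and \ref{xxx-2} and Theorem~\ref{mainlemma}), so what remains is to spell out the logic of the ``standard strategy.'' First I would make precise what it means for an algorithm to be \emph{obtained by a holographic transformation $M$ of rank $q$ such that all {\sc Equalities} are transformed to $\mathscr{M}$}. Starting from $\operatorname{\#CSP}(\mathcal{F}) \equiv_T \operatorname{Holant}(\mathcal{EQ}|\mathcal{F})$ and its planar version, the Holant Theorem (Theorem~\ref{valiant}) gives, for any rank-$q$ matrix $M \in \mathbb{C}^{q\times 2^{\ell}}$, the identity $\operatorname{Pl-Holant}(\mathcal{EQ}|\mathcal{F}) = \operatorname{Pl-Holant}(\mathcal{EQ}M \mid \check{M}\mathcal{F})$. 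The strategy being ruled out is precisely the one mirroring the Boolean fact $\mathcal{EQ} \subseteq \widehat{\mathscr{M}}$: it chooses $M$ so that $\mathcal{EQ}M \subseteq \mathscr{M}$, and then, provided $\check{M}\mathcal{F} \subseteq \mathscr{M}$ as well, the entire transformed signature grid lies in $\mathscr{M}$ and can be evaluated in polynomial time by the FKT algorithm on planar structures.

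The key step is that the requirement $\mathcal{EQ}M \subseteq \mathscr{M}$ is far stronger than it first appears, because $\mathcal{EQ} = \{(=_1),(=_2),(=_3),\dots\}$ contains {\sc Equality} signatures of \emph{every} arity. In particular it contains $(=_n)$ for some (indeed every) $n \geq 3$, and $\mathcal{EQ}M \subseteq \mathscr{M}$ would force $(=_n)M^{\otimes n} \in \mathscr{M}$. But Theorem~\ref{equality-no-matchgate} asserts precisely that for $q \geq 3$ and $n \geq 3$ no rank-$q$ matrix $M \in \mathbb{C}^{q\times 2^{\ell}}$ makes $(=_n)M^{\otimes n}$ a matchgate signature. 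Hence, for every $q \geq 3$, there is \emph{no} transformation $M$ of the required kind; a single high-arity {\sc Equality} already obstructs the whole strategy.

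I would then conclude that, since a P-time algorithm of this type can arise only through such an $M$ and no such $M$ exists, no polynomial-time algorithm of this kind exists for any $q \geq 3$. The hypothesis that $\operatorname{\#CSP}(\mathcal{F})$ is $\operatorname{\#P}$-hard is what makes the conclusion substantive rather than vacuous: it isolates exactly those families $\mathcal{F}$ for which the holographic reduction to matchgates is the canonical (and, over the Boolean domain, universal) route to planar tractability, and it shows this route is blocked once $q \geq 3$. I do not expect a genuine obstacle at this final step, since the entire difficulty was absorbed into Theorem~\ref{equality-no-matchgate}, which in turn rests on the rank bound $\operatorname{rank}(M(\Gamma)) \leq 2$ of Theorem~\ref{mainlemma} contradicting the $\operatorname{rank}(M(\Gamma)) \geq 3$ forced by $\operatorname{rank}(M(=_n)) = q \geq 3$. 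The only care required is to quantify correctly, noting that failure for one arity $n \geq 3$ already suffices to defeat the strategy.
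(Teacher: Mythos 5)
Your proposal is correct and takes essentially the same route as the paper, which proves this theorem as an immediate corollary of Theorem~\ref{equality-no-matchgate}: any algorithm of the stated type would require $\mathcal{EQ}M \subseteq \mathscr{M}$, hence $(=_n)M^{\otimes n} \in \mathscr{M}$ for some $n \geq 3$, which that theorem rules out for every rank-$q$ matrix $M$ with $q \geq 3$. Your spelling out of the Holant-theorem bookkeeping and the observation that a single arity $n \geq 3$ already defeats the strategy merely make explicit what the paper leaves as a one-sentence deduction.
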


\section{The structure of blockwise symmetric matchagate signatures}
For an $n$-bit string $\alpha\in\{0, 1\}^{n}$, we define $p(\alpha)=0$ if wt$(\alpha)$ is even and $p(\alpha)=1$ if wt$(\alpha)$ is odd.

\begin{definition}{\rm
For an even (resp. odd) matchgate $\Gamma$ with arity $n$, the {\it condensed signature} $(g_{\alpha})$  of $\Gamma$
is a vector of dimension $2^{n-1}$, and $g_{\alpha}=\Gamma^{\alpha b}$ (resp. $g_{\alpha}=\Gamma^{\alpha\bar{b}}$),
where $\alpha\in\{0, 1\}^{n-1}$ and $b=p(\alpha)$, and we say the condensed signature $g$ realized by the matchgate $\Gamma$}
\end{definition}

We prove the following theorem that characterizes the structure of blockwise symmetric matchgate signatures.
\begin{theorem}\label{fu'stheorem1}
Let $\Gamma=(\Gamma^{\alpha_{1}\alpha_2\cdots\alpha_n}$) be a blockwise symmetric sigature
with arity $n\ell$, where $n\geq 3$.
If $\Gamma$ is realized by a matchgate,
then  there exists a condensed signature $(g_{\alpha})_{\alpha\in\{0, 1\}^{\ell}}$ that is realized by a matchgate with arity $\ell+1$,
and a bitwise symmetric matchgate signature $\Gamma_{S}$ that is realized by a matchgate with arity $n$ such that
\begin{equation}\label{xxxx}
\Gamma^{\alpha_{1}\alpha_{2}\cdots\alpha_{n}}
=g_{\alpha_{1}}g_{\alpha_{2}}\cdots g_{\alpha_{n}}\Gamma_{S}^{p(\alpha_{1})p(\alpha_{2})\cdots p(\alpha_{n})}.
\end{equation}
\end{theorem}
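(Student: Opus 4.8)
The plan is to convert the rank bound of Theorem~\ref{mainlemma} into a multiplicative factorization of $\Gamma$, one block at a time, and then to certify that the two resulting factors are themselves matchgate signatures. First I would record the consequence of the parity condition for the matrix form. Writing $\beta=\alpha_2\cdots\alpha_n$, the entry $\Gamma^{\alpha_1\beta}$ vanishes unless $p(\alpha_1)+p(\beta)\equiv\epsilon\pmod 2$ for a fixed global parity $\epsilon$, so after reordering rows and columns by parity, $M(\Gamma)$ is block diagonal with a block $M_{\mathrm{even}}$ (even $\alpha_1$, $p(\beta)=\epsilon$) and a block $M_{\mathrm{odd}}$ (odd $\alpha_1$, $p(\beta)=\epsilon+1$), whence $\mathrm{rank}(M(\Gamma))=\mathrm{rank}(M_{\mathrm{even}})+\mathrm{rank}(M_{\mathrm{odd}})\le 2$ by Theorem~\ref{mainlemma}. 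I would then sharpen this to $\mathrm{rank}(M_{\mathrm{even}})\le 1$ and $\mathrm{rank}(M_{\mathrm{odd}})\le 1$: within a fixed parity class the rows $\alpha_1$ are connected by weight-two moves $e_i+e_j$, and the columns $\beta$ of a fixed parity are connected by the moves supplied by Lemma~\ref{xxx-1} (a weight-one flip in each of two blocks) and Lemma~\ref{xxx-2} (a weight-two flip inside one block). Since every such elementary $2\times 2$ minor is degenerate by Corollary~\ref{linear-dependent-coro}, Lemma~\ref{xxx-1} and Lemma~\ref{xxx-2}, propagating these relations along the connectivity graph (with the zero entries treated exactly as in the proof of Lemma~\ref{xxx-2}) forces each parity block to have rank at most $1$.

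A rank-one factorization of each parity block yields the one-block form $\Gamma^{\alpha_1\beta}=g_{\alpha_1}\,h^{\,p(\alpha_1)}_{\beta}$, where $g_{\alpha_1}$ depends only on the first block and $h^0,h^1$ are functions of the remaining blocks. I would normalize by fixing representatives $\rho(0)=\mathbf{0}$, $\rho(1)=e_1$ and setting $g_{\mathbf 0}=g_{e_1}=1$, so that $h^c_\beta=\Gamma^{\rho(c)\beta}$. Because $\Gamma$ is blockwise symmetric by hypothesis, the same argument applies with \emph{any} block in the role of the first one — all of these applications are legitimate since $\Gamma$ always has $n\ge 3$ blocks, so Theorem~\ref{mainlemma} is never invoked on fewer blocks — and blockwise symmetry identifies the factor produced in each case as the \emph{same} function $g$. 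Dividing by the product, $\Gamma^{\alpha_1\cdots\alpha_n}/(g_{\alpha_1}\cdots g_{\alpha_n})$ is then unchanged when any single block is varied within its parity class, hence it is a function of $(p(\alpha_1),\dots,p(\alpha_n))$ alone, which I would call $\Gamma_S^{\,p(\alpha_1)\cdots p(\alpha_n)}$. This gives
\begin{equation*}
\Gamma^{\alpha_1\alpha_2\cdots\alpha_n}=g_{\alpha_1}g_{\alpha_2}\cdots g_{\alpha_n}\,\Gamma_S^{\,p(\alpha_1)p(\alpha_2)\cdots p(\alpha_n)},\qquad \Gamma_S^{c_1\cdots c_n}=\Gamma^{\rho(c_1)\cdots\rho(c_n)},
\end{equation*}
and $\Gamma_S$ is automatically bitwise symmetric because $\Gamma$ is blockwise symmetric. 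The well-definedness of each $g_\alpha$ as a common ratio is exactly the content of the rank-one relations, and the low-rank situations $\mathrm{rank}(M(\Gamma))\le 1$ (including $\Gamma\equiv 0$) are folded in by setting the appropriate factor to zero whenever a denominator vanishes.

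It remains to certify the two factors as matchgate signatures. For $\Gamma_S$ I would observe that $\Gamma_S^{c_1\cdots c_n}=\Gamma^{\rho(c_1)\cdots\rho(c_n)}$ is the restriction of $\Gamma$ that pins bits $2,\dots,\ell$ of every block to $0$; since forcing the value $0$ on an external node of a matchgate merely reclassifies it as an internal node, this restriction is realized by a matchgate of arity $n$, and being symmetric it is a bitwise symmetric matchgate signature. For $g$ I would fix blocks $2,\dots,n$ to representatives, which pins $\Gamma$ down to a single block and exhibits the two parity halves of $(g_\alpha)$ as arity-$\ell$ matchgate signatures; the remaining claim is that these halves assemble into the condensed signature of one matchgate of arity $\ell+1$, and I would prove this by verifying that the reconstructed $(\ell+1)$-ary signature satisfies all Matchgate Identities, invoking the characterization of $\mathscr M$ by MGI together with the parity condition from~\cite{Matchgates-Reconsidered}.

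The main obstacle is precisely this last step. Unlike $\Gamma_S$, the factor $g$ is not a single axis-aligned restriction of $\Gamma$: it records \emph{both} parities of the distinguished block, so its realizability cannot be read off from a pinning argument and must instead be deduced from the MGI already satisfied by $\Gamma$ together with the factorization. Carrying this MGI bookkeeping through — choosing patterns and position vectors that isolate the single free block while keeping track of the alternating signs and of the zero entries forced by parity — is where the real work lies, and is what upgrades the Cai–Lu characterization by removing the non-vanishing condition and extending it to all arities $n\ge 3$.
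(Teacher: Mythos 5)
Your proposal fails at two points, one of which you did not notice. The unnoticed one is the normalization by fixed representatives $\rho(0)=\mathbf{0}$, $\rho(1)=e_1$ with $h^c_\beta=\Gamma^{\rho(c)\beta}$: nothing guarantees that the rows $M(\Gamma)^{\mathbf{0}}$ and $M(\Gamma)^{e_1}$ are nonzero, and if, say, $M(\Gamma)^{\mathbf{0}}=0$ while the even-parity class is spanned by some other row, your factorization $\Gamma^{\alpha_1\beta}=g_{\alpha_1}h^{p(\alpha_1)}_\beta$ forces the entire even part of $\Gamma$ to vanish, which is false in general. This silently reintroduces exactly the non-vanishing hypothesis $\Gamma^{\mathbf{0}\cdots\mathbf{0}}\neq 0$ of \cite{cailu10} that Theorem~\ref{fu'stheorem1} is designed to remove. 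The paper avoids this by letting Lemma~\ref{le-col-2-1} \emph{choose} the representatives: it yields a full-rank $2\times 2$ submatrix whose row indices $\theta,\eta$ have opposite parity and satisfy $\theta+\eta=e_s$, and then rank$(M(\Gamma))\le 2$ (Theorem~\ref{mainlemma}) together with parity-orthogonality of rows makes every row a multiple of $M(\Gamma)^{\theta}$ or $M(\Gamma)^{\eta}$ directly --- no propagation of elementary minors along a connectivity graph is needed, which matters because your propagation step has the standard transitivity failure at zero rows (two rows each proportional to an intermediate zero row need not be proportional to each other; ``treated exactly as in Lemma~\ref{xxx-2}'' does not address this). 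The adjacency $\theta+\eta=e_s$ is also what makes your pinning construction of $\Gamma_S$ sound once $\mathbf{0},e_1$ are replaced by $\theta,\eta$: one pins each block according to $\theta$ and keeps only the $s$-th node external. Finally, your division by $g_{\alpha_1}\cdots g_{\alpha_n}$ to define $\Gamma_S$ risks zero denominators; the paper instead proves (\ref{xxxx}) by a multiplication-only induction on the number of blocks outside $\{\theta,\eta\}$, using $\Gamma^{\alpha_i\alpha_1\cdots\alpha_{i-1}\alpha_{i+1}\cdots\alpha_n}=g_{\alpha_i}\Gamma^{\theta\alpha_1\cdots\alpha_{i-1}\alpha_{i+1}\cdots\alpha_n}$ and blockwise symmetry.

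The gap you did flag --- realizability of $g$ as a condensed signature of an arity-$(\ell+1)$ matchgate --- is indeed the crux, but the paper resolves it not by MGI bookkeeping (which you leave entirely unexecuted) but by an explicit gadget on the matchgate $G$ realizing $\Gamma$: attach a path of length $2$ at the $(\ell+t)$-th external node (the $t$-th node of the second block), with the two edge weights $1$ and $r^{-1}$ ordered according to the corresponding bit of $\theta\gamma_2\cdots\gamma_n$; keep the first block's $\ell$ nodes plus the new endpoint $v_3$ as external; and pin every other external node per $\theta\gamma_2\cdots\gamma_n$ (weight-$1$ pendant edge for bit $1$, internalization for bit $0$). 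The state of $v_3$ toggles whether the original node is matched through the path, switching between the columns $\gamma_2\cdots\gamma_n$ and $(\gamma_2+e_t)\cdots\gamma_n$ --- available in a single gadget precisely because Lemma~\ref{le-col-2-1} makes the two spanning columns differ in one bit --- so the construction outputs $\Gamma^{\alpha\gamma_2\cdots\gamma_n}$ for $p(\alpha)=p(\theta)$ and $r^{-1}\Gamma^{\alpha(\gamma_2+e_t)\cdots\gamma_n}$ for $p(\alpha)=p(\eta)$, i.e., exactly $g$ as a condensed signature, with no appeal to the MGI characterization of \cite{Matchgates-Reconsidered}. Without this gadget (or a completed MGI verification) your argument does not establish the main clause of the theorem.
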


In \cite{cailu10}, the following similar theorem was given.
\begin{theorem}\label{cai'stheorem1}
Let $\Gamma=(\Gamma^{\alpha_{1}\alpha_2\cdots\alpha_n})$ be a blockwise symmetric signature
with arity $n\ell$, where $n\geq 4$.
If $\Gamma$ is realized by an even matchgate (resp. odd matchgate) and $\Gamma^{\mathbf{0}\mathbf{0}\cdots \mathbf{0}}\neq 0$
(resp. $\Gamma^{e_1 \mathbf{0}\cdots \mathbf{0}}\neq 0$),
then  there exists a condensed signature $(g_{\alpha})_{\alpha\in\{0, 1\}^{\ell}}$ that is realized by a matchgate with arity $\ell+1$, and a bitwise symmetric matchgate signature $\Gamma_{S}$ that is realized by a matchgate with arity $n$ such that
\begin{equation*}
\Gamma^{\alpha_{1}\alpha_{2}\cdots\alpha_{n}}
=g_{\alpha_{1}}g_{\alpha_{2}}\cdots g_{\alpha_{n}}\Gamma_{S}^{p(\alpha_{1})p(\alpha_{2})\cdots p(\alpha_{n})}.
\end{equation*}
\end{theorem}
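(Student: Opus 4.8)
The plan is to combine the rank bound of Theorem~\ref{mainlemma} with the Parity Condition to reduce $\Gamma$ to a per-parity rank-one object, read off the two factors $g$ and $\Gamma_S$ from that structure, and only at the end certify that each factor is itself a matchgate signature of the stated arity.

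First I would fix the parity: by the Parity Condition we may assume $\Gamma$ is even (the odd case is identical once one accounts for the shift built into the definition of the condensed signature), so $\Gamma^{\alpha_1\cdots\alpha_n}=0$ unless $\sum_i p(\alpha_i)$ is even. Writing $M(\Gamma)$ with rows indexed by $\alpha_1$, the rows with $p(\alpha_1)=0$ and those with $p(\alpha_1)=1$ have disjoint column supports, so the row space splits as $V_e\oplus V_o$ and $\operatorname{rank}M(\Gamma)=\dim V_e+\dim V_o$. Theorem~\ref{mainlemma} gives $\operatorname{rank}M(\Gamma)\le 2$, while Lemma~\ref{le-col-2-1} (when $\operatorname{rank}\ge 2$) furnishes independent rows $M(\Gamma)^\sigma,M(\Gamma)^\tau$ with $\mathrm{wt}(\sigma+\tau)=1$, hence of opposite parity, forcing $\dim V_e=\dim V_o=1$; in the degenerate cases $\operatorname{rank}\le 1$ at least one summand is $0$ and the argument is only easier. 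Thus every parity-$b$ row is a scalar multiple of one fixed nonzero vector $v^{(b)}:=M(\Gamma)^{r_b}$, where I take $r_0,r_1$ to be the even and odd members of $\{\sigma,\tau\}$, so that they agree off a single coordinate $k$. I then define $g_\alpha$ by $M(\Gamma)^\alpha=g_\alpha\,v^{(p(\alpha))}$ with $g_{r_b}=1$, which is well defined because all parity-$b$ rows are proportional to the nonzero $v^{(b)}$.

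Next I would prove identity~\eqref{xxxx} by telescoping. The relation $\Gamma^{\alpha_1\alpha_2\cdots\alpha_n}=g_{\alpha_1}\,\Gamma^{r_{p(\alpha_1)}\alpha_2\cdots\alpha_n}$ holds for all $\alpha_2,\dots,\alpha_n$, being exactly the per-parity proportionality of rows, and blockwise symmetry of $\Gamma$ shows that pulling out any block $\alpha_i$ produces the \emph{same} factor $g_{\alpha_i}$ (swap block $i$ into position $1$, apply the proportionality, swap back). Iterating the substitution over $i=1,\dots,n$ replaces every block by its representative and gives
\[
\Gamma^{\alpha_1\cdots\alpha_n}=g_{\alpha_1}\cdots g_{\alpha_n}\,\Gamma^{r_{p(\alpha_1)}\cdots r_{p(\alpha_n)}}.
\]
Setting $\Gamma_S^{b_1\cdots b_n}:=\Gamma^{r_{b_1}\cdots r_{b_n}}$ yields~\eqref{xxxx}, and $\Gamma_S$ is bitwise symmetric since $\Gamma$ is blockwise symmetric.

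Finally I would verify the realizability claims, which is the crux. Because $r_0$ and $r_1$ agree off coordinate $k$, the signature $\Gamma_S$ is obtained from $\Gamma$ by pinning, in each block, the $\ell-1$ non-$k$ external nodes to the fixed bits of $\sigma$ while leaving the $k$-th node free; pinning external nodes of a matchgate to constants ($0=$ never removed, $1=$ always removed) again yields a matchgate signature, now of arity $n$, so $\Gamma_S\in\mathscr{M}$. For $g$ I would exhibit an arity-$(\ell+1)$ matchgate whose condensed signature is $(g_\alpha)$: pin blocks $3,\dots,n$ to their representatives and pin all of block $2$ except its $k$-th node, leaving block $1$ free; by~\eqref{xxxx} the resulting arity-$(\ell+1)$ matchgate signature equals $g_{\alpha_1}$ up to parity-indexed constants inherited from $\Gamma_S$, and matching these against the definition $\Delta^{\alpha\,p(\alpha)}=g_\alpha$ identifies $(g_\alpha)$ as a condensed matchgate signature. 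The hard part will be precisely this last step: controlling the parity-dependent normalizing constants and confirming that the promotion from the pinned arity-$\ell$ slice to a genuine arity-$(\ell+1)$ condensed signature respects \emph{all} Matchgate Identities. The factorization~\eqref{xxxx} itself is essentially forced by Theorem~\ref{mainlemma}; establishing that $g$ lies in the set of condensed matchgate signatures is where the real work lies.
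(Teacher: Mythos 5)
Your factorization half is sound and coincides with the paper's own argument (given there for the stronger Theorem~\ref{fu'stheorem1}, of which the present statement is the special case with the non-vanishing hypothesis): Theorem~\ref{mainlemma} plus the Parity Condition yields the $V_e\oplus V_o$ splitting with each summand of dimension at most one, Lemma~\ref{le-col-2-1} supplies the adjacent pair of opposite parity, the per-parity row proportionality defines $g$, and your telescoping is exactly the paper's induction on the number of blocks outside $\{\theta,\eta\}$, with $\Gamma_S$ read off on representatives. Your pinning construction for $\Gamma_S$ is also the paper's (implemented there by attaching weight-$1$ pendant edges at the positions where $\theta$ has a $1$ and internalizing all but the $s$-th node of each block), and your worry about verifying Matchgate Identities for $g$ is moot: once one exhibits an explicit planar weighted gadget, its signature satisfies MGI automatically.

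The genuine gap is where you predicted it, but it is sharper than ``controlling constants'': your gadget for $g$ pins blocks $3,\dots,n$ to the representatives $r_{b_i}$ and block $2$ to the common bits of $r_0,r_1$, so the two parity-indexed constants you must divide out are entries of $\Gamma_S$ at representative points, e.g.\ $c_0=\Gamma_S^{0\,0\,b_3\cdots b_n}$ and $c_1=\Gamma_S^{1\,1\,b_3\cdots b_n}$, and nothing in your argument guarantees either is nonzero; if $c_0=0$ the pinned slice vanishes identically on even $\alpha$ and carries no information about $g$ there. Note that $\Gamma_S^{0\cdots 0}=\Gamma^{r_0 r_0\cdots r_0}$ can vanish even when ${\rm rank}(M(\Gamma))=2$ --- removing exactly this kind of non-vanishing assumption is the point of the paper's improvement --- and the hypothesis $\Gamma^{\mathbf{0}\cdots\mathbf{0}}\neq 0$ of the present statement (which you in fact never use) does not rescue your choice, since Lemma~\ref{le-col-2-1} does not let you take $r_0=\mathbf{0}$ with $r_1$ adjacent to it. The paper sidesteps the issue entirely: it pins blocks $2,\dots,n$ not to representatives but to the columns $\gamma_2\cdots\gamma_n$ and $(\gamma_2+e_t)\gamma_3\cdots\gamma_n$ furnished by the full-rank submatrix of Lemma~\ref{le-col-2-1}, where the two relevant entries are normalized to $1$ and $r\neq 0$ by fiat, and it equalizes the two parities by attaching a length-$2$ path with edge weights $1$ and $r^{-1}$ at the $t$-th node of block $2$, whose free endpoint becomes the $(\ell+1)$-st external node realizing $g$ as a condensed signature. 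To salvage your version you would have to prove that some choice of representative tail makes both constants nonzero simultaneously (plausible via the structure of symmetric matchgate signatures, but a nontrivial extra argument you neither state nor supply), so as written the crux step is incomplete.
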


Note that we improve Theorem~\ref{cai'stheorem1} by removing the non-vanishing condition
$\Gamma^{\mathbf{0}\mathbf{0}\cdots \mathbf{0}}\neq 0$
(or $\Gamma^{e_1 \mathbf{0}\cdots \mathbf{0}}\neq 0$), and improving
$n\geq 4$ to $n\geq 3$.
Moreover, the counterexample given by $\Gamma_1$ in the end of Section~\ref{main-section} shows that the condition $n\geq 3$ is necessary.

For a blockwise symmetric matchgate signature $\Gamma=(\Gamma^{\alpha_{1}\alpha_2\cdots\alpha_n}$)
with arity $n\ell$, we have
 rank$(M(\Gamma))=0, 1$ or 2 by Lemma~\ref{mainlemma}.
If rank$(M(\Gamma))=0$, the proof is trivial. In the following, we prove Theorem \ref{fu'stheorem1} for rank($M(\Gamma)$)=2 and omit the proof for
rank($M(\Gamma)$)=1 due to the space limit.

\begin{proof}
By Lemma~\ref{le-col-2-1}, $M(\Gamma)$ has a full rank submatrix of the following form
$\left(\begin{smallmatrix}
\Gamma^{\theta\gamma_{2}\cdots\gamma_{n}}&0\\
0&\Gamma^{\eta(\gamma_{2}+e_{t})\cdots\gamma_{n}}
\end{smallmatrix}\right)$
or $\left(\begin{smallmatrix}
0&\Gamma^{\theta\gamma_{2}\cdots\gamma_{n}}\\
\Gamma^{\eta(\gamma_{2}+e_{t})\cdots\gamma_{n}}&0
\end{smallmatrix}\right),$
where $\theta, \eta, \gamma_j\in\{0, 1\}^{\ell}$ for $2\leq j\leq n$ and $\theta+\eta=e_{s}$,
where $e_s, e_t\in\{0, 1\}^{\ell}$.
Note that $p(\theta)\neq p(\eta)$.
Without loss of generality, we assume that $p(\theta)=0$ and $p(\eta)=1$ in the following.

Note that the rows $M(\Gamma)^{\theta}$ and $M(\Gamma)^{\eta}$ are linearly independent.
Then by rank$(M(\Gamma))=2$, all of the rows of $M(\Gamma)$ are linear combinations of  $M(\Gamma)^{\theta}$ and $M(\Gamma)^{\eta}$.
Moreover,  for any $\alpha_i, \alpha_j\in\{0, 1\}^{\ell}$, by the parity condition of $\Gamma$, if $p(\alpha_i)\neq p(\alpha_j)$,
the nonzero entries of $M(\Gamma)^{\alpha_{i}}$ and $M(\Gamma)^{\alpha_{j}}$
are in disjoint column positions.
So $M(\Gamma)^{\alpha_{i}}$ and $M(\Gamma)^{\alpha_{j}}$ are orthogonal. Thus for any $\alpha\in\{0, 1\}^{\ell}$
there exists $g_{\alpha}$  such that
$M(\Gamma)^{\alpha}=g_{\alpha} M(\Gamma)^{\theta}$ if $p(\alpha)= p(\theta)$,
and $M(\Gamma)^{\alpha}=g_{\alpha} M(\Gamma)^{\eta}$ if $p(\alpha)= p(\eta)$.

Up to a global nonzero scalar, we can assume that $\Gamma^{\theta\gamma_2\cdots\gamma_n}=1$ and $\Gamma^{\eta(\gamma_2+e_t)\cdots\gamma_n}=r$, then
for any $\alpha\in\{0, 1\}^{\ell}$,
$g_{\alpha}=\Gamma^{\alpha\gamma_2\cdots\gamma_n}$ if $p(\alpha)=p(\theta)$, and
$g_{\alpha}=r^{-1}\Gamma^{\alpha(\gamma_2+e_t)\cdots\gamma_n}$ if $p(\alpha)=p(\eta)$.
Now we prove that $g=(g_{\alpha})$ is a condensed signature.
Assume that $\Gamma=(\Gamma^{\alpha_1\alpha_2\cdots\alpha_n})$ is realized by the matchgate
$G$ with arity $n\ell$.
Note that $G$ has $n\ell$ external nodes
and we group the external nodes into $n$ blocks of
consecutive $\ell$ nodes each.
Firstly, we connect a path $L$ of length 2 to
the $(\ell+t)$-th external node, i.e., the $t$-th node of the second block, of $G$.
We denote the $(\ell+t)$-th external node of $G$, i.e., one endpoint of $L$, as $v_1$,
another endpoint of $L$ as $v_3$ and the middle node of $L$ as $v_2$.
If the $(\ell+t)$-th bit of $\theta\gamma_2\cdots\gamma_n$ is 0, we give the edge between
$v_1$ and $v_2$ weight $r^{-1}$ and the edge between
$v_2$ and $v_3$ weight 1.
If the $(\ell+t)$-th bit of $\theta\gamma_2\cdots\gamma_n$ is 1, we give the edge between
$v_1$ and $v_2$ weight 1 and the edge between
$v_2$ and $v_3$ weight $r^{-1}$, and view $v_3$ as an external node and $v_1, v_2$ as internal nodes.
We do nothing to the external nodes in the  first block and still view them as external nodes.
We view other external nodes, i.e., the external nodes that are not in the first block and not the $(\ell+t)$-th external node,
as internal node and do the following operations to them:
for $\ell+1\leq i\leq n\ell$ and $i\neq \ell+t$,
 if the $i$-th bit of $\theta\gamma_2\cdots\gamma_n$
is 0, do nothing to $G$; if the $i$-th bit of $\theta\gamma_2\cdots\gamma_n$
is 1, connect an edge with weight 1 to the $i$-th
external node of $G$ and view the new nodes as internal nodes.
Then we get a new matchgate $G'$ with arity $\ell+1$ and $g$ is the
condensed signature of $G'$.

Now we construct the matchgate signature $\Gamma_S$ of arity $n$.
For every block of $G$, for  $1\leq i\leq \ell$,
 if the $i$-th bit of $\theta$ is 1
then  we add an edge
 of weight 1 to the $i$-th external node,
and the new node replaces it as an external node.
 If the $i$-th bit of $\theta$ is 0
then we do nothing to it.
We get a new matchgate $H$.
Note that all the bits of $\theta, \eta$ are the same except the $s$-th bit.
Next, we define  $H'$ from  $H$:
 In each block of  $\ell$ external nodes of $H$,
we pick only the $s$-th external node as an external node
of $H'$; all others are considered internal nodes of
$H'$.
We denote the matchgate signature of  $H'$
by $\Gamma_{S}$.
Then
 $\Gamma_S=(\Gamma_{S}^{j_1 j_2 \cdots j_n})$
is a signature of arity $n$
and
$\Gamma_S^{j_1j_2\cdots j_n}=\Gamma^{\alpha_1 \alpha_2\cdots \alpha_n}$, where $\alpha_i\in\{\theta, \eta\}$ and
$j_i=p(\alpha_i)$.

Now for any $\alpha_1\alpha_2\cdots\alpha_n\in\{0, 1\}^{n\ell}$, we prove (\ref{xxxx}) by  induction on $t$,
 the number of $\alpha_i$ from $\{\alpha_1, \alpha_2, \cdots, \alpha_n\}$ that do not belong to $\{\theta, \eta\}$ in $\{\alpha_1, \alpha_2, \ldots, \alpha_n\}$.
Note that $g_{\theta}=g_{\eta}=1$.
If all $\alpha_i$ belong to $\{\theta, \eta\}$, i.e., $t=0$, we are done by the definition of $\Gamma_{S}^{j_1 j_2 \cdots j_n}$
and (\ref{xxxx}) is proved.
Inductively we assume that (\ref{xxxx}) has been proved for $t-1$
and there are $t\geq 1$ blocks in $\{\alpha_1, \alpha_2, \ldots, \alpha_n\}$
that that do not belong to  $\{\theta, \eta\}$.
Then
we pick one
  $\alpha_i$  that do not belong to $\{\theta, \eta\}$.
If  $\alpha_i$ has the same parity as $\theta$, then
the $\alpha_i$-th row $M(\Gamma)^{\alpha_i}=g_{\alpha_i}M(\Gamma)^{\theta}$
in
$M(\Gamma)=\left(\begin{smallmatrix}
\cdots&\cdots&\cdots\\
\cdots&\Gamma^{\theta\alpha_1\cdots\alpha_{i-1}\alpha_{i+1}\cdots\alpha_n}&\cdots\\
\cdots&\cdots&\cdots\\
\cdots&\Gamma^{\alpha_i\alpha_1\cdots\alpha_{i-1}\alpha_{i+1}\cdots\alpha_n}&\cdots\\
\cdots&\cdots&\cdots
\end{smallmatrix}\right).$
Thus
 $\Gamma^{\alpha_1\alpha_2\cdots\alpha_n}=\Gamma^{\alpha_i\alpha_1\cdots\alpha_{i-1}\alpha_{i+1}\cdots\alpha_n}=
g_{\alpha_i}\Gamma^{\theta\alpha_1\cdots\alpha_{i-1}\alpha_{i+1}\cdots\alpha_n}.$
Note that
the number of blocks from $\{\theta, \alpha_1, \cdots, \alpha_{i-1}, \alpha_{i+1}, \cdots, \alpha_n\}$ that do not belong to $\{\theta, \eta\}$ is
$t-1$. Thus we have
$
\Gamma^{\theta\alpha_1\cdots\alpha_{i-1}\alpha_{i+1}\cdots\alpha_n}=g_{\alpha_2}\cdots g_{\alpha_{i-1}}g_{\alpha_{i+1}}\cdots g_{\alpha_n}
\Gamma_S^{p(\theta) p(\alpha_2)\cdots p(\alpha_n)}
$
by induction.
Then
$\Gamma^{\alpha_1\alpha_2\cdots\alpha_n}=
g_{\alpha_1}g_{\alpha_2}\cdots g_{\alpha_n}
\Gamma_S^{p(\alpha_1)p(\alpha_2)\cdots p(\alpha_n)}$
 and we complete the proof.
If $p(\alpha_i)=p(\eta)$, the proof is similar and we omit it here.
\end{proof}


\renewcommand{\refname}{References}

\end{document}